\numberwithin{equation}{section}
\newtheorem{Theorem}{Theorem}[section]
\newtheorem*{Theorem*}{Theorem}
\theoremstyle{definition}
\newtheorem{Examples}[Theorem]{Examples}
\newtheorem{Conjecture}[Theorem]{Conjecture}
\newcommand{\pa}{\partial}
\newcommand{\la}{\lambda}
\newcommand{\ta}{\tau}
\newcommand{\om}{\omega}
\newcommand{\de}{\delta}
\newcommand{\ka}{\kappa}
\newcommand{\De}{\Delta}
\newcommand {\R} {{\mathbb R}}
\newcommand {\C} {{\mathbb C}}
\newcommand {\ag} {\mathfrak{a}}
\newcommand{\half}{\frac{1}{2}}
\newcommand{\rar}{\rightarrow}
\begin{document}
\allowdisplaybreaks

\renewcommand{\thefootnote}{}

\newcommand{\arXivNumber}{2305.00529}

\renewcommand{\PaperNumber}{012}

\FirstPageHeading

\ShortArticleName{$\mathfrak{gl}(3)$ Polynomial Integrable System: Different Faces}

\ArticleName{$\boldsymbol{\mathfrak{gl}(3)}$ Polynomial Integrable System: Different\\ Faces of the 3-Body/$\boldsymbol{{\mathcal A}_2}$ Elliptic Calogero Model\footnote{This paper is a~contribution to the Special Issue on Symmetry, Invariants, and their Applications in honor of Peter J.~Olver. The~full collection is available at \href{https://www.emis.de/journals/SIGMA/Olver.html}{https://www.emis.de/journals/SIGMA/Olver.html}}}

\Author{Alexander V.~TURBINER, Juan Carlos LOPEZ VIEYRA \newline and Miguel A.~GUADARRAMA-AYALA}

\AuthorNameForHeading{A.V.~Turbiner, J.C.~Lopez Vieyra and M.A.~Guadarrama-Ayala}

\Address{Instituto de Ciencias Nucleares, Universidad Nacional Aut\'onoma de M\'exico, \\
 Apartado Postal 70-543, 04510 Ciudad de M\'exico, Mexico}
\Email{\href{mailto:turbiner@nucleares.unam.mx}{turbiner@nucleares.unam.mx}, \href{mailto:vieyra@nucleares.unam.mx}{vieyra@nucleares.unam.mx}, \href{mailto:miguel.ayala@mail.mcgill.ca}{miguel.ayala@mail.mcgill.ca}}

\ArticleDates{Received July 26, 2023, in final form January 22, 2024; Published online February 03, 2024}

\Abstract{It is shown that the $\mathfrak{gl}(3)$ polynomial integrable system, introduced by Sokolov--Turbiner in [\textit{J.~Phys.~A} \textbf{48} (2015), 155201, 15~pages, arXiv:1409.7439], is equivalent to the~$\mathfrak{gl}(3)$ quantum Euler--Arnold top in a constant magnetic field. Their Hamiltonian as well as their third-order integral can be rewritten in terms of~$\mathfrak{gl}(3)$ algebra generators. In turn, all these $\mathfrak{gl}(3)$ generators can be represented by the non-linear elements of the universal enveloping algebra of the 5-dimensional Heisenberg algebra $\mathfrak{h}_5(\hat{p}_{1,2},\hat{q}_{1,2}, I)$, thus, the Hamiltonian and integral are two elements of the universal enveloping algebra $U_{\mathfrak{h}_5}$. In this paper, four different representations of the $\mathfrak{h}_5$ Heisenberg algebra are used: (I)~by differential operators in two real (complex) variables, (II)~by finite-difference operators on uniform or exponential lattices. We discovered the existence of two 2-parametric bilinear and trilinear elements (denoted $H$ and $I$, respectively) of the universal enveloping algebra $U(\mathfrak{gl}(3))$ such that their Lie bracket (commutator) can be written as a~linear superposition of {\it nine} so-called {\it artifacts} -- the special bilinear elements of $U(\mathfrak{gl}(3))$, which vanish once the representation of the $\mathfrak{gl}(3)$-algebra generators is written in terms of the $\mathfrak{h}_5(\hat{p}_{1,2},\hat{q}_{1,2}, I)$-algebra generators. In this representation all nine artifacts vanish, two of the above-mentioned elements of $U(\mathfrak{gl}(3))$ (called the Hamiltonian $H$ and the integral $I$) commute(!); in particular, they become the Hamiltonian and the integral of the 3-body elliptic Calogero model, if $(\hat{p},\hat{q})$ are written in the standard coordinate-momentum representation. If $(\hat{p},\hat{q})$ are represented by finite-difference/discrete operators on uniform or exponential lattice, the Hamiltonian and the integral of the 3-body elliptic Calogero model become the isospectral, finite-difference operators on uniform-uniform or exponential-exponential lattices (or mixed) with polynomial coefficients. If $(\hat{p},\hat{q})$ are written in complex $(z, \bar{z})$ variables the Hamiltonian corresponds to a complexification of the 3-body elliptic Calogero model on~${\mathbb C^2}$.}

\Keywords{elliptic Calogero model; integrable systems; 3-body systems}

\Classification{81R12; 81S05; 17J35; 81U15}

\renewcommand{\thefootnote}{\arabic{footnote}}
\setcounter{footnote}{0}

\section{Introduction}

Let us take a finite-dimensional Lie algebra $\mathfrak{g}$ spanned by the generators $J_i$, $i=1,2,\dots, \dim \mathfrak{g}$. The second degree polynomial in the $J$-generators,
\[
 H(J) = \sum_{i,j=1}^{\dim \mathfrak{g}} a_{ij} \{ J_i, J_j\} + \sum_i^{\dim \mathfrak{g}} b_i J_i ,
\]
where $\{ A, B\}=AB + BA$ is the anti-commutator and $\{a\}$, $\{b\}$ are parameters,
defines the Hamiltonian of the quantum Euler--Arnold top in a constant magnetic field with components $b_i$, $i=1,2,\dots, \dim \mathfrak{g}$. It is well known that
the generators $J_i$ of any semi-simple Lie algebra can be written in terms of the generators $(\hat{p},\hat{q})$ of a Heisenberg algebra, hence, $J_i=J_i(\hat{p},\hat{q})$.
We call such a system a $\mathfrak{g}$-polynomial system if its Hamiltonian is defined as
\[
 H(\hat{p},\hat{q}) \equiv H(J(\hat{p},\hat{q})) .
\]

A particular example of an $\mathfrak{sl}(2)$-polynomial system was studied in details in \cite[equation~(13)]{Turbiner:2000},
which is associated with the harmonic oscillator,
\[
 H = - \hat{q}\hat{p}^2 + \bigl(\hat{q} - p - \tfrac{1}{2}\bigr) \hat{p} = -J^0 J^- + J^0 - \bigl(p+\tfrac{1}{2}\bigr)J^- ,
\]
where $p=0,1$ and
\[
 J^0 = \hat{q}\hat{p} ,\qquad J^- = \hat{p} ,
\]
are two $\mathfrak{sl}(2)$
 generators, $\big[J^0, J^-\big]=-J^-$, see below. The general $\mathfrak{sl}(2)$-polynomial system is associated with the Heun operator, which is equivalent to the $BC_1$ elliptic Calogero
model \cite{Heun}.
The present paper is aimed at constructing an analogous but $\mathfrak{gl}(3)$-polynomial system starting from the quantum ${\mathcal A}_2$ elliptic (3-body Calogero) model.

Celebrated 3-body elliptic Calogero model or, stated differently, the ${\mathcal A}_{2}$ elliptic model (in the Hamiltonian reduction nomenclature, see, e.g., \cite{Olshanetsky:1983}), describes three point-like one-dimensional particles of unit masses on the real line with pairwise interaction given by the Weierstrass $\wp$-function with rectangular fundamental domain $(\om, {\rm i} \om_1)$. It is characterized by the Hamiltonian
\begin{gather}
 {\mathcal H}^{(e)}_{{\mathcal A}_2} = - \half \sum_{i=1}^3\frac{\pa^{2}}{\pa x_i^{2}} +
 \nu (\nu -1) (\wp (x_1-x_2) + \wp (x_2-x_3) + \wp (x_3-x_1))
 \nonumber\\ \hphantom{{\mathcal H}^{(e)}_{{\mathcal A}_2}}{}
 \equiv - \half \De^{(3)} + V_{{\mathcal A}_2} ,\label{OPHam}
\end{gather}
where $x_1$, $x_2$, $x_3$ are the coordinates of the bodies, $\De^{(3)}$ is three-dimensional flat Laplace operator, which represents the kinetic energy, $\ka \equiv \nu(\nu-1)$ is the coupling constant. The Weierstrass function $\wp (x) \equiv \wp (x\mid g_2,g_3)$ (see, e.g., \cite{WW:1927}) is defined as the solution of the equation
\begin{equation*}
 \bigl(\wp'(x)\bigr)^2 = 4 \wp^3 (x) - g_2 \wp (x) - g_3 =
 4(\wp (x) -e_1)(\wp (x) -e_2)(\wp (x) -e_3) ,
\end{equation*}
where $g_2$, $g_3$ are the so-called elliptic invariants, which can be conveniently parameterized as follows:
\begin{equation}
\label{wp-inv}
 g_2 = 12 \bigl(\ta^2 - \mu\bigr) ,\qquad g_3=4 \ta \bigl(2\ta^2-3\mu\bigr) ,
\end{equation}
where $\ta$, $\mu$ are parameters, and $e_1$, $e_2$, $e_3$ are its roots which are chosen, conventionally, to obey $e \equiv e_1+e_2+e_3 = 0$. Since the Hamiltonian~(\ref{OPHam}) is translation-invariant, $x \rar x + a$, one can introduce the center-of-mass and relative coordinates,
\begin{equation}
\label{y}
 Y=\sum_1^3 x_i ,\qquad y_i = x_i - \frac{1}{3}Y ,
\end{equation}
with the condition $\sum_1^3 y_i=0$. The Laplacian $\De^{(3)} \equiv
\sum_{i=1}^3\frac{\pa^{2}}{\pa x_i^{2}}$ in these coordinates takes the form,
\[
 \De^{(3)} = 3 \pa_Y^2 +
 \frac{2}{3} \biggl(\frac{\pa^{2}}{\pa y_1^{2}}+\frac{\pa^{2}}{\pa y_2^{2}} -
 \frac{\pa^{2}}{\pa y_1 \pa y_2}\biggr) .
\]
Separating out the center-of-mass coordinate $Y$, the two-dimensional Hamiltonian arises
\begin{gather}
 {\mathcal H}_{{\mathcal A}_2} = - \frac{1}{3} \biggl(\frac{\pa^{2}}{\pa y_1^{2}}+
 \frac{\pa^{2}}{\pa y_2^{2}} - \frac{\pa^{2}}{\pa y_1 \pa y_2}\biggr) \nonumber\\
 \hphantom{{\mathcal H}_{{\mathcal A}_2} =}{} +
 \nu (\nu -1) (\wp (y_1-y_2) + \wp (2y_1+y_2) + \wp (y_1+2y_2)) ,\label{OPHam2}
\end{gather}
which seemingly was already known to Charles Hermite as a two-dimensional generalization
of the celebrated one-dimensional Lam\'e operator
(following Serguei P.~Novikov's studies of unpublished notes by Charles Hermite communicated
to one of the authors (Alexander V.~Turbiner)),
\begin{equation*}
 {\mathcal H}^{(e)}_{{\mathcal A}_1} = - \half \frac{\pa^{2}}{\pa y^{2}} + \ka \wp (y) ,
\end{equation*}
which is also the Hamiltonian of the ${\mathcal A}_1$ elliptic model \cite{Olshanetsky:1983},
see also \cite{Lame:1989}.
We will call the operator~(\ref{OPHam2}) {\it the two-dimensional Lam\'e operator}. In general, the above procedure allows us to connect the quantum dynamics in the relative space of the three-body problem with two-dimensional quantum dynamics \cite{TME:2016}.

For many years the question of the existence of polynomial eigenfunctions
of the operator~(\ref{OPHam2}) was a challenge to answer. It was eventually solved
in 2015 by Sokolov--Turbiner in \cite{ST:2015}: the discrete values
of the coupling constant were found
\begin{equation}
\label{kappa}
 \ka \equiv \nu(\nu-1) = \frac{n}{9} (n+3) ,\qquad n=0,1,2,\dots,
\end{equation}
for which the $\frac{(n+2)(n+1)}{2}$ polynomial eigenfunctions exist in the variables
\begin{equation}
\label{trans}
x=\frac{f'(y_1)-f'(y_2)}{f(y_1) f'(y_2)-f(y_2) f'(y_1)} , \qquad
y=\frac{2(f(y_1)- f(y_2))}{f(y_1) f'(y_2)-f(y_2) f'(y_1)} ,
\end{equation}
where
\[
 f(x) = \wp (x | g_2, g_3) + \ta ,
\]
is the shifted Weierstrass function.

In very tedious and highly non-trivial calculations, performed in \cite{ST:2015},
it was found that the~${\mathcal A}_2$ elliptic Calogero model potential $V_{{\mathcal A}_2}$ (see~(\ref{OPHam}) and~(\ref{OPHam2})) in variables~(\ref{trans}) takes the form of ratio of polynomials,
\begin{equation*}
 V_{{\mathcal A}_2}(x,y) = {3\nu(\nu-1)} \frac{\bigl(x+2\tau x^2+\mu x^3-6 \bigl(\mu-\tau^2\bigr) y^2 + 3 \mu \ta x y^2\bigr)^2}{4D} ,
\end{equation*}
where the denominator
\begin{gather}
 4 D(x,y) = 3 \mu^2 x^4 y^2 + 18 \ta \mu^2 x^2 y^4 + 9\mu^2(3\ta^2 - 4 \mu) y^6 -
 4 \mu x^5 - 24 \ta \mu x^3 y^2 \nonumber \\
 \hphantom{4 D(x,y) =}{}
 - 36\mu (\ta^2 - 2\mu) x y^4 -
 4 \ta x^4 - 6 (4\ta^2 + 5\mu) x^2 y^2
 - 18 \ta(2\ta^2 - 3 \mu) y^4 \nonumber\\
 \hphantom{4 D(x,y) =}{}
 - 36 \ta x y^2 - \frac{4}{3} x^3 - 27 y^2 \label{D}
\end{gather}
was called the {\it determinant} or the {\it elliptic discriminant}. In rational limit $\ta=\mu=0$ this is the square of Vandermonde determinant or the discriminant of the cubic equation, in trigonometric limit $\mu=0$ it becomes the so-called trigonometric Vandermonde determinant or a trigonometric discriminant, see for details \cite{ST:2015}. Furthermore, the two-dimensional flat Laplacian in~(\ref{OPHam2}) becomes the Laplace--Beltrami operator in $(x,y)$-coordinates
\begin{gather*}
\De_g(x,y;\ta,\mu) = 3\biggl(\frac{x}{3}+\tau x^2+\mu x^3 +\bigl(\mu-\ta^2\bigr) y^2- \mu \ta x y^2 -
 \mu^2 x^2 y^2 \biggr)\frac{\pa^2}{\pa x^2}\nonumber \\
\hphantom{\De_g(x,y;\ta,\mu) =}{}
 + y \bigl(3 +8 \ta x+7 \mu x^2 - 3\mu \ta y^2 - 6 \mu^2 x y^2 \bigr)\frac{\pa^2}{\pa x \pa y} \nonumber\\
\hphantom{\De_g(x,y;\ta,\mu) =}{}
 + \biggl(-\frac{x^2}{3}+3\ta y^2+4\mu x y^2 -3\mu^2 y^4 \biggr)\frac{\pa^2}{\pa y^2} \nonumber \\
\hphantom{\De_g(x,y;\ta,\mu) =}{}
 + \bigl(1+4 \ta x+5 \mu x^2-3\mu \ta y^2-6\mu^2 x y^2 \bigr) \frac{\pa}{\pa x} + 2 y \bigl(2 \tau+ 3 \mu x-3\mu^2 y^2 \bigr)\frac{\pa}{\pa y} ,
\end{gather*}
with naturally-defined flat contravariant metric $g^{ij}$, $i,j=1,2$ with polynomial entries. It can be easily checked that, remarkably, expression~(\ref{D}) is equal to the determinant of this contravariant metric,
\[
 D={\rm Det} \bigl(g^{ij}\bigr) ,
\]
which explains the name {\it determinant}, used in \cite{ST:2015}.

Surprisingly, the gauge rotation of the 2-dimensional Lam\'e operator~(\ref{OPHam2}) with the determinant $D$~(\ref{D}) to the power $\nu/2$ as a gauge factor transforms operator~(\ref{OPHam2}) into the algebraic operator(!) with polynomial coefficients,
\begin{gather}
 h_{{\mathcal A}_2}(x,y) = -
 3 D^{-\frac{\nu}{2}} ({\mathcal H}_{\rm {\mathcal A}_2} - 3\nu(3\nu+1)\tau) D^{\frac{\nu}{2}}
\nonumber \\ \hphantom{h_{{\mathcal A}_2}(x,y)}{}
= \bigl(x+3 \tau x^2+3 \mu x^3 +3 \bigl(\mu-\ta^2\bigr) y^2-3 \mu \ta x y^2
 -3\mu^2 x^2 y^2 \bigr)\frac{\pa^2}{\pa x^2}
\nonumber \\ \hphantom{h_{{\mathcal A}_2}(x,y) =}{}
+ y \bigl(3+8\ta x+7\mu x^2-3\mu \ta y^2-6\mu^2 x y^2 \bigr)\frac{\pa^2}{\pa x \pa y}
\nonumber \\ \hphantom{h_{{\mathcal A}_2}(x,y) =}{}
+ \frac{1}{3}\bigl(-x^2+9 \tau y^2+12 \mu x y^2-9\mu^2 y^4 \bigr)\frac{\pa^2}{\pa y^2}
\nonumber \\ \hphantom{h_{{\mathcal A}_2}(x,y) =}{}
+ (1+3\nu) \bigl(1+4 \ta x+5 \mu x^2-3\mu \ta y^2-6\mu^2 x y^2 \bigr) \frac{\pa}{\pa x}
\nonumber \\ \hphantom{h_{{\mathcal A}_2}(x,y) =}{}
+ 2 (1+3\nu) y\bigl(2 \tau+ 3 \mu x-3\mu^2 y^2 \bigr)\frac{\pa}{\pa y}
 + 3\nu(1+3\nu) \mu \bigl(2 x-3\mu y^2 \bigr) .\label{hA2diff}
\end{gather}
This was one of the principal results obtained in the article \cite{ST:2015}, which will be essential in the present article. Let us emphasize that the operator $h_{{\mathcal A}_2}(x,y)$ looks like the two-dimensional generalization of the (algebraic) Heun operator, see, e.g., \cite{Heun}.

It was also found in \cite{ST:2015} that the second-order algebraic differential operator $h_{{\mathcal A}_2}(x,y)$ commutes with a non-trivial third-order algebraic differential operator $k_{\rm {\mathcal A}_2}$ with polynomial coefficients,
\[
 [h_{{\mathcal A}_2}(x,y), k_{\rm {\mathcal A}_2}(x,y)] = 0 ,
\]
where
\begin{gather}
k_{\rm {\mathcal A}_2}(x,y) = 2\nu(1+3\nu)(2+3\nu) \mu y \bigl(2\ta + 3\mu x - 3\mu^2 y^2\bigr)
\nonumber \\ \hphantom{k_{\rm {\mathcal A}_2}(x,y) =}{}
 +\frac{1}{3} (1+3\nu)(2+3\nu) y
\bigl(\mu + 8 \ta^2 +28\mu \ta x + 21\mu^2 x^2 - 9\mu^2\ta y^2 - 18\mu^3 x y^2\bigr)
\frac{\pa}{\pa x}
\nonumber \\ \hphantom{k_{\rm {\mathcal A}_2}(x,y) =}{}
-\frac{2}{9} (1+3\nu)(2+3\nu)
\bigl(1 +4\tau x + 6\mu x^2 - 24\mu \ta y^2 - 36\mu^{2} x {y}^{2} + 27\mu^3 y^4\bigr)
\frac{\pa}{\pa y}
\nonumber \\ \hphantom{k_{\rm {\mathcal A}_2}(x,y) =}{}
 + (2+3\nu) y \bigl(3 \ta + 4\bigl(2\ta^2+\mu\bigr)x + 17\mu \ta x^2 + 8\mu^2 x^3
\nonumber \\ \hphantom{k_{\rm {\mathcal A}_2}(x,y) =}{}
 \quad{}+ 3\mu({\tau}^{2} -2\mu) y^2 - 6\mu^{2}\ta x y^2 - 6{\mu}^3 x^2 y^2\bigr)
\frac{\pa^2}{\pa x^2}
\nonumber \\ \hphantom{k_{\rm {\mathcal A}_2}(x,y) =}{}
-\frac{2}{3} (2+3\nu)
\bigl(x+4\ta x^2 + 5\mu x^3 + 3\bigl(\mu - 4\ta^2\bigr) y^2 -27\mu^2 x^2 y^2
\nonumber \\ \hphantom{k_{\rm {\mathcal A}_2}(x,y) =}{}
\quad{} - 33\mu \ta x y^2
+ 9\mu^2 \ta y^4 + 18\mu^3 x y^4 \bigr)
\frac{\pa^2}{\pa x \pa y}
\nonumber \\ \hphantom{k_{\rm {\mathcal A}_2}(x,y) =}{}
 - (2+3\nu)y\biggl(1+\frac{8}{3}\tau x+3\mu x^2 -7\mu \ta y^2 - 10\mu^2 x y^2 + 6\mu^3 y^4\biggr)
\frac{\pa^2}{\pa y^2}
\nonumber \\ \hphantom{k_{\rm {\mathcal A}_2}(x,y) =}{}
 +y \bigl(1 + 5\ta x + 2\bigl(2\mu +3\ta^2\bigr) x^2 + 3\mu (\ta^2 - 2\mu) x y^2 + 9 \mu \ta x^3
\nonumber \\ \hphantom{k_{\rm {\mathcal A}_2}(x,y) =}{}
\quad{} - \ta\bigl(3\mu - 2\ta^2\bigr) y^2 + 3\mu^{2} x^4 - 3\mu^{2}\ta x^2 y^2 - 2\mu^3 x^3 y^2 \bigr)
\frac{\pa^3}{\pa x^3}
\nonumber \\ \hphantom{k_{\rm {\mathcal A}_2}(x,y) =}{}
+\biggl(
 -\frac{2}{3} x^2+ 2\bigl(5{\tau}^{2}+\mu\bigr) x y^2 - 2\ta x^3 + 3\ta y^2-2\mu x^4 +
 3\mu\bigl(\ta^{2}-2\mu\bigr) y^4
\nonumber \\ \hphantom{k_{\rm {\mathcal A}_2}(x,y) =}{}
\quad{} + 19\mu \tau x^2 y^2-6{\mu}^3 x^2 y^4 + 10\mu^{2} x^3 y^2 -6 \mu^2 \ta x y^4 \biggr)
\frac{\pa^3}{\pa x^2 \pa y}
\nonumber \\ \hphantom{k_{\rm {\mathcal A}_2}(x,y) =}{}
- y \biggl( x + \frac{10}{3} \tau {x}^{2} +
 \frac{11}{3} \mu {x}^{3} - 13\mu \ta x y^2 + 3\bigl(\mu-2\ta^2\bigr) y^2
 -11\mu^2 x^2 y^2
\nonumber \\ \hphantom{k_{\rm {\mathcal A}_2}(x,y) =}{}
\quad{} + 3\mu^2\ta y^4 + 6\mu^3 x y^4 \biggr)
\frac{\pa^3}{\pa x \pa y^2}
\nonumber \\ \hphantom{k_{\rm {\mathcal A}_2}(x,y) =}{}
- \biggl(y^2 + \frac{2}{27} x^3 + 2\tau x y^2 - 3\mu \ta y^4 +
 \frac{5}{3}\mu x^2 y^2 - 4{\mu}^{2} x y^4 + 2{\mu}^{3}y^6 \biggr)
\frac{\pa^3}{\pa y^3} .
\label{kA2diff}
\end{gather}
Hence, $h_{{\mathcal A}_2}(x,y)$ and $k_{{\mathcal A}_2}(x,y)$ span the two-dimensional commutative algebra
of the differential operators in two variables, which depend on three free parameters
$\nu$, $\mu$, $\tau$. This is the first non-trivial example of this. Naturally, the third-order differential operator $k_{{\mathcal A}_2}(x,y)$ can be called the {\it integral}. By making the inverse gauge rotation of the integral $k_{{\mathcal A}_2}(x,y)$,
\[
 D^{\frac{\nu}{2}}k_{{\mathcal A}_2}(x,y)D^{-\frac{\nu}{2}} ,
\]
with the determinant $D$~(\ref{D}) to the power $(-\nu/2)$ as the gauge factor and changing variables $(x, y) \rar (y_1, y_2)$~(\ref{trans}), we should arrive at the third-order integral
of the quantum 3-body elliptic Calogero model in the form of the third-order differential
operator with elliptic coefficients found by Oshima \cite{Oshima:2007}. This demonstrates
explicitly the integrability of the original 3-body elliptic Calogero model written
in $y_1$, $y_2$ coordinates.

It was concluded in \cite{ST:2015} that the 3-body elliptic Calogero model defines
a polynomial integrable model with the algebraic Hamiltonian~(\ref{hA2diff})
and the algebraic integral~(\ref{kA2diff}) with $\mu, \ta, \nu$-dependent parametric coefficients.
This model has $\mathfrak{sl}(3)$ hidden algebra in the representation~$(-3\nu, 0)$.
As a result the $\mathfrak{sl}(3)$ quantum Euler--Arnold top in a constant magnetic field occurs.
Note that for discrete values of the coupling constant $\ka$: $n=-3\nu$, $n=0,1,2,\dots$,
the $\mathfrak{sl}(3)$ hidden algebra emerges in the finite-dimensional representation, thus,
the top has a~common finite-dimensional invariant subspace for both $h$ and $k$.

The goal of this article is two-fold. First of all, the above-mentioned polynomial
integrable model, realized in terms of differential operators, will be rewritten
in terms of the generators of the Heisenberg algebra $\mathfrak{h}_5$. Hence, its Hamiltonian will appear as an element of the universal enveloping algebra $U_{\mathfrak{h}_5}$.
Then we project it into the translation-invariant or dilatation-invariant operators
defining two families of 3-parametric $\mu$, $\tau$, $\nu$ isospectral polynomial integrable models on two-dimensional uniform or exponential lattices, respectively, and
two additional families on mixed two-dimensional translation-invariant and dilatation-invariant lattices. All four families admit
2-parametric $\mu$, $\tau$ polynomial eigenfunctions for certain discrete values
of the coupling constant $\nu$. An extra polynomial integrable model occurs as a result of
a special complexification of $\R^2$ to $\C^2$ via the Heisenberg algebra $\mathfrak{h}_5$ generators
acting on the two-dimensional Hilbert space with the Gaussian measure. The spectrum of this
model is characterized by infinite multiplicity and for certain discrete values
of the coupling constant $\ka$~(\ref{kappa}) the eigenfunctions are poly-analytic functions in two complex variables of the fixed degree.
Second of all, it will be shown that $\mathfrak{gl}(3)$ polynomial integrable model, defined
in the Fock space, is related with special bilinear and trilinear, 2-parametric
elements of the universal enveloping algebra of the algebra $\mathfrak{gl}(3)$. It turns out that
these non-linear elements commute once they are written in terms of {\it any} concrete realization of the algebra $\mathfrak{gl}(3)$ by elements of the universal enveloping algebra
$U_{\mathfrak{h}_5}$.

The article is organized with introduction, Sections~\ref{sec:2}--\ref{sec:7}, conclusions and three Appendices~\ref{app:a},~\ref{app:b},~\ref{app:c}.
In Section~\ref{sec:2}, the 3-body elliptic Calogero model in algebraic form is reformulated in Fock space and its $\mathfrak{gl}(3)$-polynomial integrable model is defined. Section~\ref{sec:3} contains four lattice versions of the 3-body elliptic Calogero model. Section~\ref{sec:4} is dedicated to complexification of the $\mathfrak{gl}(3)$-polynomial integrable model into $\C^2$. In Section~\ref{sec:5}, all nine artifacts
of the $\mathfrak{gl}(3)$ algebra are presented as bilinear combinations of the $\mathfrak{gl}(3)$ generators and Theorem is proved that all of them vanish if the $\mathfrak{gl}(3)$ generators are written as non-linear elements of the universal enveloping algebra $U_{\mathfrak{h}_5}$. Section~\ref{sec:6} contains the explicit expressions of the Hamiltonian, the cubic integral and their commutator in terms of the $\mathfrak{gl}(3)$-algebra generators.
In Section~\ref{sec:7}, the $G_2$/3-body (with pairwise and 3-body interactions) elliptic problem is briefly discussed and the Fock space representation of the $G_2$ elliptic 3-body Hamiltonian is constructed.

Throughout the remaining text the {\it hats} in $p,q$'s will be dropped: $({\hat p}, {\hat q}) \rar (p,q)$.

\section{3-body elliptic Calogero model in the Fock space}\label{sec:2}

Let us take 5-dimensional Heisenberg algebra $\mathfrak{h}_5$ spanned by the generators~$p_x$,~$p_y$,~$q_x$,~$q_x$,~$I$, which obey the commutation relations,
\begin{alignat}{5}
 &[p_x, q_x]=1 ,\qquad&& [p_y, q_y]=1 ,\qquad&& [p_x, q_y]=0 ,\qquad&& [p_y, q_x]=0 ,&\nonumber \\
 &[p_x, p_y]=0 ,\qquad&& [q_x, q_y]=0 ,\qquad&& [p_{x,y}, I]=0 ,\qquad&& [q_{x,y}, I]=0 ,& \label{h5}
\end{alignat}
see Appendix~\ref{app:a3}. The universal enveloping algebra of the algebra $\mathfrak{h}_5$: $U_{\mathfrak{h}_5}$, is spanned
by all ordered monomials in $p_x$, $p_y$, $q_x$, $q_y$.

Now let us form in $U_{\mathfrak{h}_5}$ a second degree polynomial in $p$-generators,
\begin{align}
 h_{{\mathcal A}_2}(p_x, q_x, p_y, q_y)&{}=
\bigl(q_x+3 \tau q_x^2+3 \mu q_x^3 +3 \bigl(\mu-\ta^2\bigr) q_y^2-3 \mu \ta q_x q_y^2
 -3\mu^2 q_x^2 q_y^2 \bigr)p_x^2
\nonumber \\
 &\quad{}+ q_y \bigl(3+8\ta q_x+7\mu q_x^2-3\mu \ta q_y^2-6\mu^2 q_x q_y^2 \bigr) p_x p_y
\nonumber \\
 &\quad{}+ \frac{1}{3}\bigl(-q_x^2+9 \tau q_y^2+12 \mu q_x q_y^2-9\mu^2 q_y^4 \bigr)p_y^2
\nonumber\\
 &\quad{}+ (1+3\nu) \bigl(1+4 \ta q_x+5 \mu q_x^2-3\mu \ta q_y^2-6\mu^2 q_x q_y^2 \bigr) p_x
\nonumber\\
 &\quad{} + 2 (1+3\nu) q_y\bigl(2 \tau+ 3 \mu q_x-3\mu^2 q_y^2 \bigr)p_y+ 3\nu(1+3\nu) \mu \bigl(2 q_x-3\mu q_y^2 \bigr)
\nonumber \\
 &{} \equiv \sum_{i,j=x,y} c_{ij}(q) p_i p_j + 
 \sum_{i=x,y} c_{i}(q) p_i + c_0(q) ,
\label{hA2-pq}
\end{align}
where $\ta$, $\mu$, $\nu$ are parameters. Here the coefficients $c_{ij}$ are the 4th degree polynomials in $q$-generators, $c_{i}$ are the 3rd degree ones and $c_0$ is the 2nd degree polynomial. We also form another non-linear combination in $p,q$-generators
in the $U_{\mathfrak{h}_5}$,
\begin{gather}
k_{{\mathcal A}_2}(p_x, q_x, p_y, q_y) = 2\nu(1+3\nu)(2+3\nu) \mu q_y \bigl(2\ta + 3\mu q_x - 3\mu^2 q_y^2\bigr)
\nonumber \\
\qquad{} +\frac{1}{3} (1+3\nu)(2+3\nu) q_y
\bigl(\mu + 8 \ta^2 +28\mu \ta q_x + 21\mu^2 q_x^2 - 9\mu^2\ta q_y^2 - 18\mu^3 q_x q_y^2\bigr)
p_x
\nonumber \\ 
\qquad{}-\frac{2}{9} (1+3\nu)(2+3\nu)
\bigl(1 +4\tau q_x + 6\mu q_x^2 - 24\mu \ta q_y^2 - 36\mu^{2} q_x {q_y}^{2} + 27\mu^3 q_y^4\bigr)
p_y
\nonumber \\ 
\qquad{} + (2+3\nu) q_y \bigl(3 \ta + 4\bigl(2\ta^2+\mu\bigr)q_x + 17\mu \ta q_x^2 + 8\mu^2 q_x^3
\nonumber \\
\qquad\quad{} + 3\mu\bigl({\tau}^{2} -2\mu\bigr) q_y^2 - 6\mu^{2}\ta q_x q_y^2 - 6{\mu}^3 q_x^2 q_y^2\bigr)
p_x^2
\nonumber \\ 
\qquad{}-\frac{2}{3} (2+3\nu)
\bigl(q_x+4\ta q_x^2 + 5\mu q_x^3 + 3\bigl(\mu - 4\ta^2\bigr) q_y^2 -27\mu^2 q_x^2 q_y^2
\nonumber \\
\qquad\quad{} - 33\mu \ta q_x q_y^2
+ 9\mu^2 \ta q_y^4 + 18\mu^3 q_x q_y^4 \bigr)
p_x p_y
\nonumber \\ 
\qquad{} - (2+3\nu)q_y\biggl(1+\frac{8}{3}\tau q_x+3\mu q_x^2 -7\mu \ta q_y^2 - 10\mu^2 q_x q_y^2 + 6\mu^3 q_y^4\biggr)
p_y^2
\nonumber \\ 
\qquad{} +q_y \bigl(1 + 5\ta q_x + 2\bigl(2\mu +3\ta^2\bigr) q_x^2 + 3\mu \bigl(\ta^2 - 2\mu\bigr) q_x q_y^2 + 9 \mu \ta q_x^3
\nonumber \\
\qquad\quad{} - \ta\bigl(3\mu - 2\ta^2\bigr) q_y^2 + 3\mu^{2} q_x^4 - 3\mu^{2}\ta q_x^2 q_y^2 - 2\mu^3 q_x^3 q_y^2 \bigr)
p_x^3
\nonumber \\ 
\qquad{}+\biggl(
 -\frac{2}{3} q_x^2+ 2\bigl(5{\tau}^{2}+\mu\bigr) q_x q_y^2 - 2\ta q_x^3 + 3\ta q_y^2-2\mu q_x^4 +
 3\mu(\ta^{2}-2\mu) q_y^4
\nonumber \\
\qquad\quad{} + 19\mu \tau q_x^2 q_y^2 -6{\mu}^3 q_x^2 q_y^4 + 10\mu^{2} q_x^3 q_y^2 -6 \mu^2 \ta q_x q_y^4 \biggr)
p_x^2 p_y
\nonumber \\ 
\qquad{} - q_y \biggl( q_x + \frac{10}{3} \tau {q_x}^{2} +
 \frac{11}{3} \mu {q_x}^{3} - 13\mu \ta q_x q_y^2 + 3(\mu-2\ta^2) q_y^2
 -11\mu^2 q_x^2 q_y^2
\nonumber \\
\qquad\quad{} + 3\mu^2\ta q_y^4 + 6\mu^3 q_x q_y^4 \biggr)
p_x p_y^2
\nonumber \\ 
\qquad{} - \biggl(q_y^2 + \frac{2}{27} q_x^3 + 2\tau q_x q_y^2 - 3\mu \ta q_y^4 +
 \frac{5}{3}\mu q_x^2 q_y^2 - 4{\mu}^{2} q_x q_y^4 + 2{\mu}^{3}q_y^6 \biggr)
p_y^3
\nonumber \\
\qquad{}\equiv\sum_{i,j,k=x,y} d_{ijk}(q) p_i p_j p_k +
\sum_{i,j=x,y} d_{ij}(q) p_i p_j +
 \sum_{i=x,y} d_{i}(q) p_i + d_0 ,
\label{kA2-pq}
\end{gather}
where the coefficients $d_{ijk}$, $d_{ij}$, $d_{i}$, $d_{0}$ are polynomials in $q$ of degrees 6, 5, 4 and 3, respectively.

\begin{Theorem}\label{t1}
 The expressions~\eqref{hA2-pq} and~\eqref{kA2-pq} form the commutative pair,
\[
 [h_{{\mathcal A}_2}, k_{{\mathcal A}_2}] = 0 ,
\]
for any values of parameters $\ta$, $\mu$, $\nu$.
\end{Theorem}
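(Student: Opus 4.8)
The plan is to avoid expanding the (enormous) commutator directly and instead to transfer the already-established commutativity of the differential operators \eqref{hA2diff} and \eqref{kA2diff} through a faithful representation. The first observation is structural: because the relations \eqref{h5} set $[p_i,q_i]=1$ and the paper declares $U_{\mathfrak{h}_5}$ to be spanned by the ordered monomials in $p_x,p_y,q_x,q_y$ alone (the central $I$ being identified with~$1$), the algebra in which $h_{\mathcal{A}_2}$ and $k_{\mathcal{A}_2}$ live is precisely the Weyl algebra $A_2$ in two pairs of canonical variables.

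Next I would introduce the coordinate--momentum (Schr\"odinger) homomorphism $\Phi\colon A_2\to \mathrm{Diff}(\C[x,y])$ determined on generators by $q_x\mapsto x$, $q_y\mapsto y$ (as multiplication operators) and $p_x\mapsto \partial_x$, $p_y\mapsto\partial_y$. Since $[\partial_x,x]=1$, and so on, $\Phi$ respects \eqref{h5} and is an algebra homomorphism. The decisive point is that $h_{\mathcal{A}_2}$ and $k_{\mathcal{A}_2}$ are written in \emph{normal-ordered} form, all $q$'s standing to the left of all $p$'s; hence a monomial $c(q)\,p_ip_j$ is carried by $\Phi$ to $c(x,y)\,\partial_i\partial_j$ with the coefficient on the left. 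A term-by-term comparison of \eqref{hA2-pq} with \eqref{hA2diff} and of \eqref{kA2-pq} with \eqref{kA2diff} then shows that $\Phi(h_{\mathcal{A}_2})=h_{\mathcal{A}_2}(x,y)$ and $\Phi(k_{\mathcal{A}_2})=k_{\mathcal{A}_2}(x,y)$ \emph{exactly}, for all values of $\tau,\mu,\nu$.

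Finally I would invoke the classical fact that $A_2$ is a simple algebra over a field of characteristic zero: the kernel of the nonzero homomorphism $\Phi$ is a two-sided ideal, hence trivial, so $\Phi$ is injective. Applying $\Phi$ to the commutator and using that it is an algebra map gives
\[
\Phi\bigl([h_{\mathcal{A}_2},k_{\mathcal{A}_2}]\bigr)=[\Phi(h_{\mathcal{A}_2}),\Phi(k_{\mathcal{A}_2})]=[h_{\mathcal{A}_2}(x,y),k_{\mathcal{A}_2}(x,y)]=0 ,
\]
the last equality being the result recalled from \cite{ST:2015}. Injectivity of $\Phi$ then forces $[h_{\mathcal{A}_2},k_{\mathcal{A}_2}]=0$ in $U_{\mathfrak{h}_5}$ for all $\tau,\mu,\nu$.

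The main obstacle is not analytic but a matter of careful bookkeeping: one must check that the ordering conventions match so that $\Phi$ sends the Heisenberg-algebra expressions to \emph{precisely} the differential operators of \cite{ST:2015}, and that the central element $I$ is consistently treated as~$1$ so that the argument genuinely takes place in the simple Weyl algebra, where faithfulness holds. Once this identification is secured, no recomputation of the lengthy commutator is required; a brute-force verification in $U_{\mathfrak{h}_5}$ (e.g., by computer algebra) would also work but is far less transparent and obscures why commutativity is inherited from the differential realization.
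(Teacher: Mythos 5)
Your proof is correct, and it takes a genuinely different route from the paper's. The paper's own proof of Theorem~\ref{t1} is one line --- ``by direct calculation'' --- i.e., a brute-force expansion of the commutator inside $U_{\mathfrak{h}_5}$ of the kind described in Appendix~\ref{app:c}; this is exactly the computation you set out to avoid. Your argument instead observes that $U_{\mathfrak{h}_5}$, as the paper defines it (relations $[p_i,q_j]=\delta_{ij}$ with $I$ identified with~$1$, basis of ordered monomials), is the Weyl algebra $A_2$; that the normal ordering of \eqref{hA2-pq} and \eqref{kA2-pq} makes the coordinate--momentum map $\Phi$ of \eqref{h3-pq} carry them term by term exactly onto the differential operators \eqref{hA2diff} and \eqref{kA2diff} (a correspondence the paper itself asserts); and that $\Phi$ is injective --- by simplicity of $A_2$ in characteristic zero or, more elementarily, because the ordered basis $q^\alpha p^\beta$ goes to the linearly independent operators $x^\alpha \partial^\beta$ on $\C[x,y]$. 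Faithfulness then pulls the identity $[h_{{\mathcal A}_2}(x,y),k_{{\mathcal A}_2}(x,y)]=0$, established in \cite{ST:2015} for all three parameters $\tau$, $\mu$, $\nu$ and quoted as such in the paper, back to $[h_{{\mathcal A}_2},k_{{\mathcal A}_2}]=0$ in $U_{\mathfrak{h}_5}$. What your approach buys: it exposes the theorem as logically equivalent to (not merely analogous to) the 2015 differential-operator result, so no new computation is required, and the same faithfulness argument explains at once why every realization in Sections~\ref{sec:3} and~\ref{sec:4} inherits commutativity. What the paper's approach buys: self-containedness --- it re-verifies the identity without leaning on the earlier published calculation --- although, by your own argument, that ``direct calculation'' is the same computation performed in an isomorphic algebra.
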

\begin{proof} By direct calculation.
\end{proof}

Note, that it can be checked by direct calculation that $h_{{\mathcal A}_2}$, $k_{{\mathcal A}_2}$ written in the (classical) phase space $(p,q)$-variables, $\{p, q\} = 1$, do {\it not} form the commutative pair with respect to the Poisson bracket, $\{h_{{\mathcal A}_2}, k_{{\mathcal A}_2}\} \neq 0$, for any values of the parameters $\ta$, $\mu$, $\nu$. The reason for it is conceptual. Since we do not discuss in this paper the classical-quantum correspondence being fully focused on the quantum case, this will be discussed elsewhere.

From one side, it can be easily checked by direct calculation when $(p,q)$ generators of $\mathfrak{h}_5$ are written in the coordinate-momentum representation~(\ref{h3-pq}) the expressions~(\ref{hA2-pq}) and~(\ref{kA2-pq}) become~(\ref{hA2diff}) and~(\ref{kA2diff}), respectively. From another side, the expressions~(\ref{hA2-pq}) and~(\ref{kA2-pq}) can be written in terms of $\mathfrak{gl}(3)$ generators in $(-3\nu, 0)$ representation~(\ref{gl3pq}) as bilinear and trilinear combinations with $\nu$-dependent coefficients, respectively, cf. \cite[equations~(20) and (25)]{ST:2015}. Hence, the expressions~(\ref{hA2-pq}) and~(\ref{kA2-pq}) define the integrable $\mathfrak{gl}(3)$ Euler--Arnold quantum top, or, equivalently, the integrable $\mathfrak{sl}(3)$ Euler--Arnold quantum top of spin $(-3\nu)$.

By introducing the vacuum $|0\rangle$ as an object annihilated by $p$-operators:
\[
 p_x\,|0\rangle = 0 ,\qquad p_y\,|0\rangle = 0 ,
\]
in addition to the universal enveloping algebra $U_{\mathfrak{h}_5}$, this leads to definition of the Fock space. The formal eigenvalue problem in the Fock space for the Hamiltonian $h_{{\mathcal A}_2}$ is as follows:
\begin{equation}
\label{h-Fock}
 h_{{\mathcal A}_2}(p_x, q_x, p_y, q_y) \phi^{(h)}(q_x, q_y) |0\rangle = \la^{(h)} \phi^{(h)}(q_x, q_y) |0\rangle ,
\end{equation}
where $\phi^{(h)}(q)$ is the eigen-operator and $\la^{(h)}$ is the eigenvalue (spectral parameter).
Analogously,%
\begin{equation}
\label{k-Fock}
 k_{{\mathcal A}_2}(p_x, q_x, p_y, q_y) \phi^{(k)}(q_x, q_y) |0\rangle = \la^{(k)} \phi^{(k)}(q_x, q_y) |0\rangle .
\end{equation}
Owing to Theorem~\ref{t1}
the eigenvalue problems~(\ref{h-Fock}) and~(\ref{k-Fock}) have common eigen-operators~$\phi$. If spin $-\nu=n/3$, $n=0,1,2,\dots$, which corresponds to the $\mathfrak{gl}(3)$ finite-dimensional representation $(n,0)$,
the eigenvalue problems~(\ref{h-Fock}) and~(\ref{k-Fock}) have $\frac{(n+2)(n+1)}{2}$ polynomial
eigen-operators~$\phi^{(h,k)}(q_x, q_y)$.

\begin{Examples}\label{examples}\quad
\begin{itemize}\itemsep=0pt
\item For $n=0$ (thus, at zero coupling, $\ka=0$),
\[
 \la^{(h)}_{0,1} = 0 ,\qquad \phi^{(h)}_{0,1} = 1 .
\]

\item For $n=1$ at coupling
\[
 \ka = \frac{4}{9} ,
\]
the operator $h_{A_2}$ has a three-dimensional kernel (three zero modes) of the type
\[
(a_1 q_x+a_2 q_y+b)
\]
with coefficients $a_1$, $a_2$ which do not vanish simultaneously and,
\[
 \la^{(h)}_{1,i} = 0 ,\qquad i=1,2,3 .
\]

\item The first non-zero eigenvalues appear for $n=2$, thus, at
\[
 \ka = \frac{10}{9} .
\]
In total, there exist six polynomial eigenstates. Eigenvalues are given
by the roots of the factorized algebraic equation of degree 6,
\begin{gather*}
 \bigl(\bigl(\la^{(h)}\bigr)^2+ 4 \ta \la^{(h)} + 4 \mu\bigr)
 \bigl(\bigl(\la^{(h)}\bigr)^2+ 8 \ta \la^{(h)} + 4 \mu + 12 \ta^2\bigr) \\
 \qquad{}\times
 \bigl(\bigl(\la^{(h)}\bigr)^2+ 12 \ta \la^{(h)} + 4 \mu + 16 \ta^2\bigr)
 = 0 .
\end{gather*}
Explicitly,
\begin{gather*}
 \bigl({\la^{(h)}}\bigr)_{\pm}^{(1)} = - 2\bigl(\ta \pm \sqrt {\ta^2 - \mu}\bigr) ,\qquad
 \bigl({\la^{(h)}}\bigr)_{\pm}^{(2)} = - 2\bigl(2\ta \pm \sqrt {\ta^2 - \mu}\bigr) ,\\
 \bigl({\la^{(h)}}\bigr)_{\pm}^{(3)} = - 2\bigl(3\ta \pm \sqrt {5\ta^2 - \mu}\bigr) ,
\end{gather*}
and the corresponding six eigen-operators are of the form
\[
\bigl(a_1 q_x^2+a_2 q_x q_y + a_3 q_y^2 b_1 q_x+b_2 q_y+c\bigr)
\]
with parameters $a_1$, $a_2$, $a_3$, which do not vanish simultaneously, and $b_1$, $b_2$, $c$. In the limit $\ta=\mu=0$ (it corresponds to the rational ${\mathcal A}_2$ integrable model without the harmonic oscillator terms) all six eigenvalues are degenerate to zero.
\end{itemize}
\end{Examples}

\section[gl(3)-polynomial integrable model on a lattice]{$\boldsymbol{\mathfrak{gl}(3)}$-polynomial integrable model on a lattice}\label{sec:3}

\subsection{Uniform translation-invariant lattice}

Let us introduce the shift operator,
\[
 T_{\de} f(x) = f(x + \de) ,\qquad T_{\de}={\rm e}^{\de \pa_x} ,
\]
where $\de \in {\R}$ is parameter, which, sometimes, is called {\it spacing}, and
construct a canonical pair of shift operators (see, e.g., \cite{Turbiner:1995})
\begin{equation}
\label{delta}
 D_{\de} = \frac{T_{\de} - 1}{\de} , \qquad X_{\de} = x T_{-\de} =
 x(1-\de{ D}_{-\de}) ,
\end{equation}
where the operator $D_{\de}$ is defined as
\[
 D_{\de} f(x) = \frac{f(x + \de) - f(x)}{\de} ,
\]
sometimes, it is called the Norlund derivative. It is easy to check that $[D_{\de}, X_{\de}]=1$, hence, $D_{\de}$, $X_{\de}$ form the canonical pair, both operators
are non-local. In the limit $\de \rar 0$ this pair becomes the well-known
coordinate-momentum representation $(\pa_x, x)$ of the Heisenberg algebra $\mathfrak{h}_3 (p,q,I)$,
\[
 [p, q]=1 ,\qquad [p, I]=[q, I]=0 .
\]

For non-vanishing $\de$, the canonical pair~(\ref{delta}) belongs to the extended universal enveloping algebra ${\hat U}_{\mathfrak{h}_3}$. These operators act on infinite uniform lattice space
with spacing $\de$
\[
 \{ \dots, x-2\de , x -\de ,x , x+\de , x+2\de , \dots \}
\]
marked by $x \in {\R}$ -- a position of a central (or reference) point of the lattice.

By taking $D_{\de}$, $X_{\de}$~(\ref{delta}) as basic elements, it can be shown that algebra $\mathfrak{h}_5$
of finite-difference (shift) operators can be formed:
\begin{alignat}{5}
 &[D_{\de_1, x} , X_{\de_1, x}]=1 ,&& [D_{\de_2, y} , X_{\de_2, y}]=1 , && [D_{\de_1, x} , X_{\de_2, y}]=0 , && [D_{\de_2, y} , X_{\de_1, x}]=0 ,&
\nonumber\\
 &[D_{\de_1, x}, D_{\de_2, y}]=0 ,\quad && [X_{\de_1, x}, X_{\de_2, y}]=0 , \quad && [D_{\de_1 (\de_2), x(y)}, I]=0 ,\quad && [X_{\de_1 (\de_2), x(y)}, I]=0 .& \label{h5-delta}
\end{alignat}
Evidently, the vacuum vector,
\[
 |0\rangle = 1 ,
\]
for any $(\de_1,\de_2) \in {\R}^2$.

This algebra acts on the rectangular uniform lattice with spacings $(\de_1,\de_2)$.
By identifying in~(\ref{hA2-pq}) and~(\ref{kA2-pq}) the variables $(p,q)$ with $(D_{\de}, X_{\de})$,
we arrive at the Hamiltonian and the integral of the polynomial integrable model
on the two-dimensional uniform lattice with spacings $(\de_1,\de_2)$,%
\begin{equation}
\label{h-delta}
 h^{(\de_1,\de_2)} = h_{A_2}(D_{\de_1, x}, X_{\de_1, x}, D_{\de_2, y}, X_{\de_2, y}) ,
\end{equation}
and
\begin{equation}
\label{k-delta}
 k^{(\de_1,\de_2)} = k_{A_2}(D_{\de_1, x}, X_{\de_1, x}, D_{\de_2, y}, X_{\de_2, y}) ,
\end{equation}
If parameter $-\nu=n/3$, $n=0,1,2,\dots $, the eigenvalue problems for the operators~(\ref{h-delta}) and~(\ref{k-delta}) have $\frac{(n+2)(n+1)}{2}$ common polynomial eigenfunctions
$\phi^{(h,k)}(x, y)$ in the form of triangular polynomials,
\[
 \langle x^{m_x} y^{m_y} \mid 0 \leq m_x + m_y \leq n\rangle .
\]
The first polynomial eigenfunctions for $n=0,1,2$ can be easily found by using the results collected in Examples~\ref{examples}.

\subsection{Exponential dilatation-invariant lattice}

Let us introduce the dilation operator,
\[
 T_{q} f(x) = f(q x),\qquad T_{q}=q^{A} ,\qquad A \equiv x \pa_x ,
\]
where $q \in {\C}$, and construct a canonical pair of dilatation operators
\begin{equation}
\label{q}
 D_{q} = x^{-1}\frac{T_{q} - 1}{q-1} ,\qquad X_{q} = \frac{A (q-1)}{T_{q} - 1} x ,
\end{equation}
see \cite{CT}, where $[D_{q} , X_{q}]=1$ for any $q$. It can be checked that their product
is $q$-independent,
\[
 X_{q} D_{q}=x \pa_x = A \qquad \mbox{and}\qquad D_{q} X_{q}=\pa_x x=A+1 .
\]
The operator $D_q$ is called the Jackson symbol (or the Jackson derivative).
Both operators~$X_{q}$,~$D_{q}$ are pseudodifferential operators with action
on monomials as follows:
\[
 D_{q} x^n = \{ n \}_q x^{n-1} ,\qquad X_q x^n = \frac{n+1}{\{ n+1\}_q} x^{n+1} ,
\]
where $\{ n \}_q=\frac{1 - q^n}{1 - q}$ is the so called $q$-number $n$.

By taking $D_{q}$, $X_{q}$~(\ref{q}) as basic elements it can be shown that algebra $\mathfrak{h}_5$
of discrete operators can be formed:
\begin{alignat*}{5}
 &[D_{q_1, x} , X_{q_1, x}]=1 ,\quad&& [D_{q_2, y} , X_{q_2, y}]=1 ,\quad&& [D_{q_1, x} , X_{q_2, y}]=0 ,\quad&& [D_{q_2, y} , X_{q_1, x}]=0 ,&
\\
 &[D_{q_1, x}, D_{q_2, y}]=0 ,\qquad&& [X_{q_1, x}, X_{q_2, y}]=0 ,\qquad&& [D_{q_1 (q_2), x(y)}, I]=0 ,\qquad&& [X_{q_1 (q_2), x(y)}, I]=0 ,&
\end{alignat*}
cf.~(\ref{h5-delta}). Evidently, the vacuum vector,
\[
 |0\rangle = 1
\]
for any $(q_1, q_2) \in {\R}^2$.

This algebra acts on the exponential lattice with spacings $(q_1,q_2)$.
By identifying in~(\ref{hA2-pq}) and~(\ref{kA2-pq}) the variables $(p,q)$ with $(D_{q}, X_{q})$
we arrive at the Hamiltonian and the integral of the polynomial integrable model
on the two-dimensional exponential lattice with spacings $(q_1,q_2)$,
\begin{equation}
\label{h-q}
 h^{(q_1,q_2)} = h_{A_2}(D_{q_1, x}, X_{q_1, x}, D_{q_2, y}, X_{q_2, y})
\end{equation}
and
\begin{equation}
\label{k-q}
 k^{(q_1,q_2)} = k_{A_2}(D_{q_1, x}, X_{q_1, x}, D_{q_2, y}, X_{q_2, y}) .
\end{equation}
If parameter $-\nu=n/3$, $n=0,1,2,\dots $, the eigenvalue problems for~(\ref{h-q}) and~(\ref{k-q}) have $\frac{(n+2)(n+1)}{2}$ common polynomial eigenfunctions
$\phi^{(h,k)}(x, y)$ in the form of triangular polynomials,
\[
 \langle x^{m_x} y^{m_y}\mid 0 \leq m_x + m_y \leq n\rangle .
\]
The first polynomial eigenfunctions for $n=0,1,2$ can be easily found by using the results collected in Examples~\ref{examples}.

\subsection{Mixed translation-invariant and dilatation-invariant lattice}

It is evident that one can construct the operators $h$, $k$ acting in $x$-direction on the uniform lattice and in $y$-direction on the exponential lattice and visa versa. Therefore,
there are two ways to realize it by taking
\begin{equation}
\label{xy-de,q}
 p_x=D_{\de_1, x} ,\qquad q_x=X_{\de_1, x} ,\qquad p_y=D_{q_1, y} ,\qquad q_y=X_{q_1, y} ,
\end{equation}
or
\begin{equation}
\label{xy-q,de}
 p_x=D_{q_2, x} ,\qquad q_x=X_{q_2, x} ,\qquad p_y=D_{\de_2, y} ,\qquad q_y=X_{\de_2, y} .
\end{equation}
In both cases the vacuum vector remains the same,
\[
 |0\rangle = 1 .
\]

\noindent
In a straightforward way one can build the Hamiltonian and the integral
\begin{equation}
\label{h-delta-q}
 h^{(\de_1,q_1)} = h_{A_2}(D_{\de_1, x}, X_{\de_1, x}, D_{q_1, y}, X_{q_1, y}) ,
\end{equation}
and
\begin{equation}
\label{k-delta-q}
 k^{(\de_1,q_1)} = k_{A_2}(D_{\de_1, x}, X_{\de_1, x}, D_{q_1, y}, X_{q_1, y}) ,
\end{equation}

\noindent
for~(\ref{xy-de,q}) and

\begin{equation}
\label{h-q-delta}
 h^{(q_2,\de_2)} = h_{A_2}(D_{q_2, x}, X_{q_2, x}, D_{\de_2, y}, X_{\de_2, y}) ,
\end{equation}
and
\begin{equation}
\label{k-q-delta}
 k^{(q_2,\de_2)} = k_{A_2}(D_{q_2, x}, X_{q_2, x}, D_{\de_2, y}, X_{\de_2, y}) ,
\end{equation}

\noindent
for~(\ref{xy-q,de}). In similar way as for~(\ref{hA2diff})--(\ref{kA2diff}),~(\ref{h-delta})--(\ref{k-delta}),~(\ref{h-q})--(\ref{k-q}),
if parameter $\nu=n/3$, ${n=0,1,2,\dots}$, the eigenvalue problems for
(\ref{h-delta-q})--(\ref{k-delta-q}) and~(\ref{h-q-delta})--(\ref{k-q-delta}) have $\frac{(n+2)(n+1)}{2}$ common polynomial eigenfunctions
$\phi^{(h,k)}(x, y)$ in the form of triangular polynomials,
\[
 \langle x^{m_x} y^{m_y} \mid 0 \leq m_x + m_y \leq n\rangle .
\]
The first polynomial eigenfunctions for $n=0,1,2$ can be easily found by using the results collected in Examples~\ref{examples}.

Remarkably, all these five integrable models~(\ref{hA2diff})--(\ref{kA2diff}),~(\ref{h-delta})--(\ref{k-delta}),~(\ref{h-q})--(\ref{k-q}) and~(\ref{h-delta-q})--(\ref{k-delta-q}),~(\ref{h-q-delta})--(\ref{k-q-delta}) are {\it isospectral}.

\section[gl(3)-polynomial integrable model in C\^{}2]{$\boldsymbol{\mathfrak{gl}(3)}$-polynomial integrable model in $\boldsymbol{{\bf \C^2}}$}\label{sec:4}

Introduce the five-dimensional Heisenberg algebra $\mathfrak{h}_5 \equiv \mathbb{H}_{5} = \big\{ \ag_1,\ag^\dag_1, \ag_2,\ag^\dag_2, I \big\}$ with commutator $\big[\ag_{i}, \ag^\dag_j\big] = \delta_{ij}I$, $i,j = 1,2$, $[\ag_{i}, \ag_j]=\big[\ag^\dag_{i}, \ag^\dag_j\big]=0$ and
$[\ag_{i}, I]=\big[\ag^\dag_{i}, I\big]=0$
by using a new, mathematics-oriented notations~\cite{TV}.
Its representation on the standard Hilbert space,
\[
L_2\bigl(\C^2,{\rm d}\mu_2\bigr)=L_2(\C,{\rm d}\mu) \otimes L_2(\C,{\rm d}\mu) ,
\]
with the Gaussian measure,
\begin{equation*}
{\rm d}\mu(z)=\pi^{-1}{\rm e}^{-z\cdot {\bar z}}{\rm d}v(z) ,
\end{equation*}
where ${\rm d}v(z)={\rm d}x{\rm d}y$ is the Euclidean volume measure on $\C=\R^{2}$, is given by two canonical pairs of raising and lowering operators related to $z=(z_1,z_2) \in \C^2$:
\begin{equation*}
 \ag^\dag_j = {\bar z}_j -\frac{\pa}{\pa {z}_j}, \qquad \ag_j =
 \frac{\pa}{\pa {\bar z}_j} ,
\end{equation*}
where $\ag^\dag_j$ is adjoint to $\ag_j$ with $\big[\ag_j,\ag^\dag_j\big]=I$, $j = 1,2$,
see \cite{TV} for details and $I$ is the identity operator. The vacuum vector
$|0\rangle$, defined by
\[
 \ag_1|0\rangle=0 ,\qquad \ag_2|0\rangle=0 ,
\]
is any two-dimensional analytic function.

Formally, by taking~(\ref{hA2-pq}) and~(\ref{kA2-pq}) one can build the Hamiltonian
\begin{equation*}
 h^{(\C^2)} = h_{{\mathcal A}_2}\bigl(\ag_1, \ag^\dag_1, \ag_2, \ag^\dag_2\bigr) ,
\end{equation*}
and the integral
\begin{equation*}
 k^{(\C^2)} = k_{{\mathcal A}_2}\bigl(\ag_1, \ag^\dag_1, \ag_2, \ag^\dag_2\bigr) .
\end{equation*}
It is evident that they continue to commute. This procedure can be considered as a complexification of the original polynomial model~(\ref{hA2diff}) and~(\ref{kA2diff}),
which emerged from the 3-body/${\mathcal A}_2$ elliptic Calogero model as its algebraic version.
Formally, the Hamiltonian is the sixth-order differential operator in $\frac{\pa}{\pa {z}}$, $\frac{\pa}{\pa {\bar z}}$.
Note that the first polynomial eigenfunctions of $h^{(\C^2)}$ for $n=0,1,2$ can be easily found by using the results collected in Examples~\ref{examples}.

\section[gl(3) algebra: artifacts]{$\boldsymbol{\mathfrak{gl}(3)}$ algebra: artifacts}\label{sec:5}

Long ago one of the authors (Alexander V.~Turbiner) discovered in the algebra $\mathfrak{gl}(3)$ with generators defined in~(\ref{gl3})
the existence of nine bilinear combinations in generators with unusual property:
all those bilinear combinations vanish if the representation of $\mathfrak{gl}(3)$ generators
by the first-order differential operators~(\ref{sl3do}) is taken!
The explicit form of the bilinear combinations is the following \cite{at:1994}:
\begin{alignat}{4}
&A_1 = J_{8}J_{5} - J_{7}J_{6} ,\qquad &&
A_2 = J_{8}J_{3} - J_{7}J_{4} ,\qquad &&
A_3 = J_{7}J_{2} + J_{5}J_{0} + J_{5} ,&
\nonumber
\\
&A_4 = J_{8}J_{1} + J_{4}J_{0} + J_{4} ,\qquad &&
A_5 = J_{7}J_{1} + J_{3}J_{0} + J_{3} ,\qquad &&
A_6 = J_{8}J_{2} + J_{6}J_{0} + J_{6} ,&
\nonumber
\\
&A_7 = J_{6}J_{3} - J_{5}J_{4} + J_{3} ,\qquad &&
A_8 = J_{6}J_{1} - J_{4}J_{2} ,\qquad &&
A_9 = J_{5}J_{1} - J_{3}J_{2} . &
\label{gl3artifacts}
\end{alignat}

\begin{Theorem}\label{t2}
For the $\mathfrak{gl}(3)$ generators, written in terms of the Heisenberg algebra $\mathfrak{h}_5$ generators~\eqref{gl3pq}, all nine artifacts~\eqref{gl3artifacts} vanish
\[
 A_{1,\dots,9}(p_{x,y}, q_{x,y}) = 0 .
\]
\end{Theorem}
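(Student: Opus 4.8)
The plan is to deduce the vanishing in $U_{\mathfrak{h}_5}$ from the already-known vanishing of the artifacts in the first-order differential operator realization~\eqref{sl3do}, by exploiting the fact that the coordinate-momentum representation of $\mathfrak{h}_5$ is \emph{faithful}. Concretely, let $\rho$ be the algebra homomorphism $U_{\mathfrak{h}_5} \rar \mathrm{Diff}(\R^2)$ determined by $p_x \mapsto \pa_x$, $p_y \mapsto \pa_y$, $q_x \mapsto x$, $q_y \mapsto y$ and $I \mapsto 1$. The key structural fact is that the two realizations~\eqref{sl3do} and~\eqref{gl3pq} of the $\mathfrak{gl}(3)$ generators are intertwined by $\rho$: the $\mathfrak{h}_5$-expressions~\eqref{gl3pq} are exactly~\eqref{sl3do} with $\pa_x,\pa_y,x,y$ replaced by $p_x,p_y,q_x,q_y$ in the prescribed order, so that $\rho\big(J_i(p_{x,y},q_{x,y})\big)$ is the $i$-th generator in the form~\eqref{sl3do}.

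First I would record that each artifact $A_k$ in~\eqref{gl3artifacts} is a fixed bilinear-plus-linear polynomial in $J_0,\dots,J_8$, hence becomes a single element of $U_{\mathfrak{h}_5}$ once the $J_i$ are substituted from~\eqref{gl3pq}. Because $\rho$ is an algebra homomorphism it respects the associative, ordered products appearing in the artifacts, so
\[
 \rho\big(A_k(p_{x,y},q_{x,y})\big) = A_k\big(J^{\mathrm{do}}_0,\dots,J^{\mathrm{do}}_8\big),
\]
namely the same artifact evaluated on the differential-operator generators~\eqref{sl3do}. By the result recalled immediately before the statement (see~\cite{at:1994}), every such artifact vanishes identically as a differential operator, so the right-hand side is the zero operator for all $k=1,\dots,9$.

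It remains to pass from $\rho(A_k)=0$ back to $A_k=0$, which is where faithfulness enters. On the Poincar\'e--Birkhoff--Witt basis $q_x^{a}q_y^{b}p_x^{c}p_y^{d}$ of $U_{\mathfrak{h}_5}$ (with $I$ acting as the scalar $1$) the map $\rho$ sends each such monomial to $x^{a}y^{b}\pa_x^{c}\pa_y^{d}$, and these differential operators are linearly independent---as is seen by testing on the monomials $x^{m}y^{n}$---so $\rho$ is injective. Hence $\rho\big(A_k\big)=0$ forces $A_k=0$ in $U_{\mathfrak{h}_5}$, which is the assertion. The main obstacle is the bookkeeping in the intertwining step: one must check that substituting~\eqref{gl3pq} into each generator and specializing via $\rho$ reproduces~\eqref{sl3do} with the correct operator ordering, since the ``$+J_5$''-type shifts in $A_3$--$A_7$ are precisely the ordering corrections generated by $[p,q]=1$. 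Once that compatibility is fixed, the faithfulness argument is immediate; the brute-force alternative---substituting~\eqref{gl3pq} directly into the nine $A_k$ and collapsing with the $\mathfrak{h}_5$ relations~\eqref{h5}---is elementary but long, and is presumably what a ``direct calculation'' proof would carry out.
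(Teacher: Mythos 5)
Your proposal is correct, but it proves the theorem by a different route than the paper does. The paper's proof is literally ``by direct calculation'': one substitutes the realization~\eqref{gl3pq} into each of the nine artifacts~\eqref{gl3artifacts} and collapses the result to zero using the $\mathfrak{h}_5$ relations~\eqref{h5}. You instead reduce the statement to the previously known vanishing of the artifacts in the differential-operator realization~\eqref{sl3do} (recalled in Section~\ref{sec:5} from~\cite{at:1994}), via the observation that the coordinate-momentum map $\rho\colon p_{x,y}\mapsto \pa_{x,y}$, $q_{x,y}\mapsto x,y$ is an algebra homomorphism sending each $J_i$ of~\eqref{gl3pq} exactly to the corresponding operator of~\eqref{sl3do}, and that $\rho$ is injective on $U_{\mathfrak{h}_5}$ (the PBW monomials $q_x^aq_y^bp_x^cp_y^d$ go to the linearly independent operators $x^ay^b\pa_x^c\pa_y^d$). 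Both steps are sound; note that your worry about ordering corrections in $A_3$--$A_7$ is unnecessary, since once $\rho(J_i)=J_i^{\mathrm{do}}$ is checked generator by generator, the homomorphism property automatically transports every ordered product, shifts included. What your argument buys is conceptual clarity and economy: it exhibits Theorem~\ref{t2} as the same fact as the 1994 differential-operator statement, transported through the (faithful) Schr\"odinger representation of the Weyl algebra, with no new computation. What it costs is self-containedness: it rests entirely on the external result of~\cite{at:1994}, and if one insists on verifying that result too, the calculation one performs is precisely the paper's direct calculation in disguise, since under the isomorphism the two computations coincide.
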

This theorem can be proved by direct calculation.

It can be checked that the commutators of the artifacts are of the form
\[
 [A_i, A_J] = \sum c_{ij}^k(J) A_k ,
\]
where $c_{ij}^k(J)$ for $i \neq j$ are linear combinations of the $\mathfrak{gl}(3)$ generators. For example,
\[
 [A_1, A_2] = - (J_8 A_1 + J_7 A_2) ,
\]
see Appendix~\ref{app:c} for the explicit calculation.
Hence, $A_i$ do not span a Lie algebra. Interesting question to ask is what would happen
with the artifacts $A$'s if instead of $\mathfrak{gl}(3)$ generators the~$\mathfrak{gl}(3)$ Kac--Moody currents are taken. It will be addressed elsewhere.

\section[The Hamiltonian and the integral in gl(3)-algebra generators]{The Hamiltonian and the integral in $\boldsymbol{\mathfrak{gl}(3)}$-algebra generators}\label{sec:6}

\subsection{Hamiltonian}
By taking the Hamiltonian~(\ref{hA2-pq}), one can demonstrate that it can be rewritten in
the $\mathfrak{gl}(3)$ abstract generators, which obey formally the commutation relations~(\ref{gl3}),
\begin{align}
h_{{\mathcal A}_{2}}(J) ={}& 2 J_6 J_1-\frac{1}{3}J_5^2-J_1 J_0
+\mu(2J_8J_5+J_7 J_3-2J_7 J_0+3J_4^2+2J_7)
\nonumber
\\ &{}+
 \tau(4J_8 J_2+4J_7 J_1-J_6^2-J_3^2+5J_6+5J_3)
- 3\tau\mu J_8 J_4 - 3\mu^2J_8^2 - 3\tau^2J_4^2 ,\label{hA2-J}
\end{align}
hence, in extremely compact form; here $\mu$, $\tau$ are parameters and the dependence on
$\nu$ can be included into the representation (into the generators) and eventually
is absent!
Hence,~(\ref{hA2-J}) is two-parametric, bilinear element of the universal enveloping
algebra $U_{\mathfrak{gl}(3)}$. If $\mu=\tau=0$ the element $h_{{\mathcal A}_{2}}$~(\ref{hA2-J}) dramatically simplifies,
\begin{eqnarray*}
h^{(\mu=\tau=0)}_{{\mathcal A}_{2}}(J) = 2 J_6 J_1 - \tfrac{1}{3} J_5^2-J_1 J_0 .
\end{eqnarray*}
By substituting the generators $J_{0,1,5,6}$ in the form of differential operators
(\ref{sl3do}), one can see that this element corresponds to the 3-body/${\mathcal A}_2$ rational Calogero model (without harmonic oscillator term).
Non-surprisingly, the raising generators $J_{7,8}$ are absent in this case, as well as the generators $J_{4,3,2}$.

\subsection{Cubic integral}

In a similar way, as was done in order to construct~(\ref{hA2-J}), by taking the integral~(\ref{kA2-pq}) in the Fock space representation one can demonstrate
that it can be rewritten in the $\mathfrak{gl}(3)$ abstract generators which obey
the commutation relations
(\ref{gl3}),
\begin{align}
k_{{\mathcal A}_{2}}(J) = {}& -\frac{2}{9}J_6^2 J_2
+\frac{2}{9}J_6 J_5 J_1
+\frac{5}{9}J_6 J_2 J_0
-\frac{2}{27}J_5^3
+\frac{2}{9}J_5 J_1 J_0
+J_4J_1^2
\nonumber
\\ &{}
-\frac{2}{9}J_3^2J_2
-\frac{2}{9}J_2J_0^2
+\frac{2}{9}J_6J_2
+\frac{2}{9}J_5J_1
+\frac{2}{9}J_2J_0
\nonumber
\\ &{}
- \tau\biggl(
 \frac{8}{9}J_7J_6J_2
+\frac{8}{9}J_7J_5J_1
-\frac{8}{9}J_7J_2J_0
+\frac{2}{9}J_6J_6J_5
-\frac{2}{9}J_6J_5J_3
+\frac{2}{9}J_5J_3J_3
\nonumber
\\ &\quad{}
-2J_4J_3J_1
-3J_8J_1J_1
+\frac{2}{3}J_6J_5
+\frac{2}{3}J_5J_3
-\frac{16}{9}J_5J_0
-4J_4J_1\biggr)
\nonumber
\\ &{}
+ \tau^2\biggl(\frac{2}{3}J_6^2J_4
-\frac{2}{3}J_6J_4J_3
-\frac{8}{3}J_6J_4J_0
+\frac{2}{3}J_4J_3^2
-\frac{8}{3}J_4J_3J_0
+\frac{8}{3}J_4J_0^2
\nonumber
\\ &\quad{}
-\frac{4}{3}J_6J_4
-\frac{4}{3}J_4J_3
+ \frac{8}{3}J_4J_0
+ 2J_4\biggr)
+ 2\tau^3J_4^3
\nonumber
\\ &{}
- \mu\biggl(\frac{1}{3}J_7J_6J_5
+\frac{2}{3}J_7J_5J_3
-\frac{4}{3}J_7J_5J_0
+\frac{2}{3}J_6^2J_4
-\frac{2}{3}J_6J_4J_3
-\frac{8}{3}J_6J_4J_0
\nonumber
\\ &\quad{}
-\frac{1}{3}J_4J_3^2
+\frac{10}{3}J_4J_3J_0
-\frac{1}{3}J_4J_0^2
+\frac{4}{3}J_7J_5
-\frac{4}{3}J_6J_4
+\frac{5}{3}J_4J_3
-\frac{1}{3}J_4J_0\biggr)
\nonumber
\\ &{}
-\mu\tau\biggl(
4J_8J_0
-\frac{1}{3}J_8J_6^2
+\frac{28}{3}J_8J_6J_3
+\frac{4}{3}J_8J_6J_0
-\frac{7}{3}J_8J_3^2
\nonumber
\\ &\quad{}
+\frac{16}{3}J_8J_3J_0
-\frac{4}{3}J_8J_0^2
-10J_7J_6J_4
+3J_4^3
-J_8J_6
+7J_8J_3
-\frac{8}{3}J_8
\biggr)
\nonumber
\\ &{}
+ 3\mu\tau^2J_8J_4^2 - 3\mu^2\tau J_8^2J_4
\nonumber
\\ &{}
+ \mu^2\bigl(2J_8 J_7 J_6
+J_8 J_7 J_3
-2J_8 J_7 J_0
-6J_8 J_4^2
+4J_8 J_7\bigr)
-2\mu^3 J_8^3 ,\label{kA2-J}
\end{align}

\noindent
where $\mu$, $\tau$ are parameters, see~\eqref{wp-inv}, and the explicit dependence on $\nu$ is absent! Hence, it is two-parametric, trilinear element of the universal enveloping algebra $U_{\mathfrak{gl}(3)}$. If $\mu=\tau=0$, the element $k_{{\mathcal A}_{2}}$~(\ref{kA2-J}) dramatically simplifies,
\begin{gather*}
k^{(\mu=\tau=0)}_{{\mathcal A}_{2}}(J) = -\frac{2}{9}J_6^2 J_2 + \frac{2}{9}J_6 J_5 J_1 + \frac{5}{9}J_6 J_2 J_0
-\frac{2}{27}J_5^3 + \frac{2}{9}J_5 J_1 J_0
+J_4J_1^2 -
\frac{2}{9}J_3^2J_2
 \\ \hphantom{k^{(\mu=\tau=0)}_{{\mathcal A}_{2}}(J) =}{}
-\frac{2}{9}J_2J_0^2
+\frac{2}{9}J_6J_2
+\frac{2}{9}J_5J_1
+\frac{2}{9}J_2J_0 ,
\end{gather*}
it corresponds to the 3-body/${\mathcal A}_2$ rational integrable Calogero model (without harmonic oscillator term). Since this is the exactly-solvable problem, non-surprisingly, the raising generators $J_{7,8}$ are absent.

\subsection{Commutator}

By taking~(\ref{hA2-J}) and~(\ref{kA2-J}), one can make the extremely cumbersome (and very lengthy) calculation of their Lie bracket (commutator) by using the specially designed \textsc{Maple}~18 code.
An example of the code is presented in Appendix~\ref{app:c}.
It was the main goal of the master thesis of one of the authors (Miguel A.~Guadarrama-Ayala).
Eventually, it leads to the following statement:

\begin{Theorem}
The commutator of the expressions~\eqref{hA2-J} and~\eqref{kA2-J}
is the linear superposition of artifacts~\eqref{gl3artifacts},
\begin{equation}
\label{commutator}
[h_{{\mathcal A}_2}(J), k_{{\mathcal A}_2}(J)] = \sum_{i=1}^9 c_i(J) A_i ,
\end{equation}
for any values of parameters $\ta$, $\mu$, where $c_i(J)$ are some coefficient functions in $J$'s.
\end{Theorem}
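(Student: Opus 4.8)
The plan is to prove the identity~\eqref{commutator} purely \emph{algebraically}, working inside the universal enveloping algebra $U_{\mathfrak{gl}(3)}$ with the abstract generators $J_0,\dots,J_8$ obeying only the formal commutation relations~\eqref{gl3}, \emph{without} ever passing to any concrete representation. Both $h_{\mathcal A_2}(J)$ in~\eqref{hA2-J} and $k_{\mathcal A_2}(J)$ in~\eqref{kA2-J} are explicit noncommutative polynomials in the $J$'s with coefficients that are polynomials in the parameters $\tau,\mu$. First I would expand the commutator $[h_{\mathcal A_2}(J),k_{\mathcal A_2}(J)]$ by bilinearity, reducing it to a sum of commutators of individual monomials; each such commutator is evaluated by repeated use of the Leibniz rule $[AB,C]=A[B,C]+[A,C]B$ together with the structure constants from~\eqref{gl3}. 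Because $h$ is bilinear and $k$ is trilinear, the raw commutator is a (large) degree-four element of $U_{\mathfrak{gl}(3)}$; the whole computation is finite and mechanical, which is exactly why the authors implemented it in the \textsc{Maple}~18 code referenced in Appendix~\ref{app:c}.

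Once the commutator is collected into a canonical ordered form (say, all $J$'s written in a fixed monomial order after repeatedly applying~\eqref{gl3} to normal-order products), the task becomes: exhibit coefficient functions $c_i(J)$, each itself a polynomial in the $J$'s and in $\tau,\mu$, such that $[h_{\mathcal A_2}(J),k_{\mathcal A_2}(J)] - \sum_{i=1}^9 c_i(J)A_i = 0$ identically in $U_{\mathfrak{gl}(3)}$. Since the artifacts $A_i$ of~\eqref{gl3artifacts} are themselves bilinear in the generators, multiplying them by the $c_i(J)$ and normal-ordering gives another element of $U_{\mathfrak{gl}(3)}$ of the same degree as the commutator. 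The verification then reduces to matching coefficients of the canonically ordered monomial basis on both sides. Concretely, I would determine the $c_i(J)$ by an ansatz of the appropriate bidegree in $(J;\tau,\mu)$ and solve the resulting linear system over the coefficients; this is a finite linear-algebra problem once both sides are normal-ordered.

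The main obstacle is twofold. First, the bookkeeping: normal-ordering a degree-four noncommutative expression against the $\mathfrak{gl}(3)$ relations produces a very large number of monomials, and both the commutator and the trial combination $\sum c_i A_i$ must be reduced to the \emph{same} canonical form before comparison—any ordering ambiguity silently breaks the match. Second, and more delicate, the decomposition is genuinely non-unique: the $A_i$ do not span a Lie algebra and, as noted after Theorem~\ref{t2}, their commutators~$[A_i,A_j]$ are themselves linear combinations of the $A_k$ with generator-valued coefficients. Consequently the $c_i(J)$ are not uniquely determined, and there is no canonical choice; one must simply produce \emph{some} valid set. I would therefore treat the existence of the $c_i$ as the real content and verify it constructively via the solved linear system, rather than attempt a closed-form derivation.

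The conceptual payoff, which I would flag but not re-prove here, is that this identity together with Theorem~\ref{t2} immediately yields Theorem~\ref{t1}: in any realization of the $\mathfrak{gl}(3)$ generators through $\mathfrak{h}_5$ via~\eqref{gl3pq} every artifact $A_i$ vanishes, so the right-hand side of~\eqref{commutator} collapses to zero and $h_{\mathcal A_2}$ and $k_{\mathcal A_2}$ commute. Thus the hard algebraic step is precisely to certify that the commutator lies in the left ideal generated by the artifacts, and the $\mathfrak{h}_5$-realization does the rest.
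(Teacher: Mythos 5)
Your proposal is correct and is essentially the paper's own proof: the authors also establish~\eqref{commutator} by a direct, mechanical computation in $U(\mathfrak{gl}(3))$ (implemented in \textsc{Maple}~18), normal-ordering the degree-four commutator against the relations~\eqref{gl3} and exhibiting explicit coefficients (the $D_{1},\dots,D_{12}$ of Appendix~\ref{app:b}, organized as a polynomial in $\tau,\mu$), exactly as your ansatz-plus-linear-system scheme would produce. Your closing remarks — that the decomposition is non-unique, that membership in the left ideal generated by the $A_i$ is the real content, and that this together with Theorem~\ref{t2} recovers Theorem~\ref{t1} — match the paper's own commentary on why the result is non-trivial.
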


\begin{proof} By direct calculation by using the specially designed \textsc{Maple}~18 code. It was carried out on a regular DELL desktop computer with 2.4~GhZ working frequency and 6~GB RAM memory.

Intuitively, this result~(\ref{commutator}) is evident: in the Fock space representation, where $h,k \in U_{\mathfrak{h}_5}$, the commutator should vanish, see Theorems~\ref{t1} and~\ref{t2}. Hence, the commutator should be a~(non)-linear combination of the artifacts. The fact that is a linear combination of the artifacts is non-trivial.

Alternative way to represent the commutator~(\ref{commutator}) is as follows
\begin{align}
[ h_{{\mathcal A}_2}(J),k_{{\mathcal A}_2}(J) ]={}&
D_1 + D_2\tau + D_3\mu + D_4 \tau^2 + D_5 \tau\mu
+ D_6 \mu^2
+ D_7 \tau^2\mu+ D_8 \tau\mu^2
\nonumber\\&{}
+ D_9 \mu^3 + D_{10} \tau^3\mu
+ D_{11} \tau^2\mu^2 + D_{12} \tau\mu^3 ,
\nonumber
\end{align}
where for the coefficients $D(J,A)$ are presented in Appendix~\ref{app:b}.
\end{proof}

\section[G\_2 elliptic 3-body problem]{$\boldsymbol{G_2}$ elliptic 3-body problem}\label{sec:7}

By adding the 3-body interaction potential to the 3-body elliptic Calogero Hamiltonian~(\ref{OPHam2}), we arrive at the 3-body Wolfes elliptic Hamiltonian
in $(y_1, y_2)$-coordinates~(\ref{y}),
\begin{gather}
 {\mathcal H}_{{G}_2} = - \frac{1}{3} \biggl(\frac{\pa^{2}}{\pa y_1^{2}}+
 \frac{\pa^{2}}{\pa y_2^{2}} - \frac{\pa^{2}}{\pa y_1 \pa y_2}\biggr)
 \nonumber\\ \hphantom{{\mathcal H}_{{G}_2} =}{}
 +(\nu-\la)(\nu-\la-1) (\wp (y_1-y_2) + \wp (2y_1+y_2) + \wp (y_1+2y_2))
\nonumber\\ \hphantom{{\mathcal H}_{{G}_2} =}{}
 + \la(3\la-1) (\wp (y_1) + \wp (y_2) + \wp (y_1+y_2)) ,\label{OPHam2G}
\end{gather}
which is also called the $G_2$ elliptic Hamiltonian in the Hamiltonian reduction nomenclature~\cite{Olshanetsky:1983}. It is characterized by two coupling constants which
can be parameterized conveniently as $\ka \equiv (\nu-\la)(\nu-\la-1)$ and
$\ka_2 \equiv \la(3\la-1)$. If $\ka_2=0$ (or $\la=0, 1/3$), we return at the ${\mathcal A}_2$ elliptic model. It was shown in \cite{ST:2015} that by making the gauge rotation and changing variables
to $\bigl(u=x, v=y^2\bigr)$, see~(\ref{trans}), the Hamiltonian~(\ref{OPHam2G}) appears in the form of the algebraic operator~$h_{G_2}$~-- the second-order differential operator with polynomial coefficients,
\begin{gather}
h_{G_2}(u,v) = \bigl(u+3 \tau u^2+3 \mu u^3 +3 \bigl(\mu-\ta^2\bigr) v-3 \mu \ta u v
 -3\mu^2 u^2 v \bigr)\frac{\pa^2}{\pa u^2}
\nonumber\\ \hphantom{h_{G_2}(u,v) =}{}
+2 v \bigl(3+8\ta u+7\mu u^2-3\mu \ta v -6\mu^2 u v \bigr)\frac{\pa^2}{\pa u \pa v}
\nonumber\\ \hphantom{h_{G_2}(u,v) =}{}
+ 4v \biggl(-\frac{u^2}{3}+3 \tau v+4 \mu u v-3\mu^2 v^2 \biggr)\frac{\pa^2}{\pa v^2}
\nonumber \\ \hphantom{h_{G_2}(u,v) =}{}
+ (1+3\nu) \bigl(1+4 \ta u+5 \mu u^2-3\mu \ta v-6\mu^2 u v \bigr) \frac{\pa}{\pa u}
\nonumber \\ \hphantom{h_{G_2}(u,v) =}{}
+ 2\biggl(-\frac{u^2}{3} + \ta (7+12\nu) v + 2 \mu (5+9\nu) u v
 - 9 \mu^2 (1+2\nu) v^2 \biggr)\frac{\pa}{\pa v}
\nonumber \\ \hphantom{h_{G_2}(u,v) =}{}
+ 3\nu(1+3\nu) \mu (2 u -3\mu v )
\nonumber \\ \hphantom{h_{G_2}(u,v) =}{}
+\la \biggl( 6 \bigl(1+2 \tau u+\mu u^2\bigr) \frac{\pa}{\pa u} +
4 \bigl(-u^2+3 \tau v+3 \mu u v\bigr)\frac{\pa}{\pa v} +18 \nu \mu u \biggr) .
\label{hG2}
\end{gather}
If $-3\nu=n$, where $n=0,1,2,3,\dots$, this operator has a finite-dimensional triangular invariant subspace,
\[
 {\mathcal P}_n = \langle u^{n_u} v^{n_v}\mid 0 \leq n_u+2n_v \leq n\rangle.
\]
This space coincides with finite-dimensional representation space of the algebra $\mathfrak{g}^{(2)}$: infinite-dimensional, eleven-generated algebra of differential operators, see \cite{ST:2015} and references therein, where it acts irreducibly. It implies that $h_{G_2}$ can be rewritten in terms $\mathfrak{g}^{(2)}$ generators (the Burnside theorem) \cite{ST:2015}.

After extremely tedious (and very lengthy) symbolic calculations by using the \textsc{Maple}~18 code, see Appendix~\ref{app:c} for an example, it can be shown that the existence of a differential operator~$k_m(u,v)$ of degree five such that the operator
\[
 k_{\rm G_2} = k_{{\mathcal A}_2}^2(u,v) + \la k_m(u,v; \la) ,
\]
commutes with the $G_2$ elliptic Hamiltonian $h_{G_2}$~(\ref{hG2}); $k_m$ has the form of polynomial in $\la$ of finite degree. Note that in the particular case of the $G_2$ rational Hamiltonian (see~(\ref{hG2}) at $\mu=\tau=0$), this operator was calculated in \cite{TTW:2009} (where it corresponded to the case $k=3$) in slightly different variables other than $u$, $v$: it is a polynomial in $\la$ of degree four. In general, this operator will be presented in its explicit form elsewhere. So far this operator is unknown in the explicit form.

By taking the 5-dimensional Heisenberg algebra $\mathfrak{h}_5$ spanned by the generators~$p_u$,~$p_v$,~$q_u$,~$q_v$,~$I$, see~(\ref{h5}), one can form the following second degree polynomial in~$p_u$, $p_v$:
\begin{gather}
h_{G_2}(p_u, p_v, q_u, q_v) = \bigl(q_u+3 \tau q_u^2+3 \mu q_u^3 +
3 \bigl(\mu-\ta^2\bigr) q_v-3 \mu \ta q_u q_v -3\mu^2 q_u^2 q_v \bigr) p_u^2
\nonumber \\ \hphantom{h_{G_2}(p_u, p_v, q_u, q_v) =}{}
+ 2 q_v \bigl(3+8\ta q_u+7\mu q_u^2-3\mu \ta q_v -6\mu^2 q_u q_v \bigr)p_u p_v
\nonumber \\ \hphantom{h_{G_2}(p_u, p_v, q_u, q_v) =}{}
+ 4q_v \biggl(-\frac{q_u^2}{3}+3 \tau q_v+4 \mu q_u q_v-3\mu^2 q_v^2 \biggr) p_v^2
\nonumber \\ \hphantom{h_{G_2}(p_u, p_v, q_u, q_v) =}{}
+ (1+3\nu) \bigl(1+4 \ta q_u+5 \mu q_u^2-3\mu \ta q_v-6\mu^2 q_u q_v \bigr) p_u
\nonumber \\ \hphantom{h_{G_2}(p_u, p_v, q_u, q_v) =}{}
+ 2\biggl(-\frac{q_u^2}{3} + \ta (7+12\nu) q_v + 2 \mu (5+9\nu) q_u q_v
 - 9 \mu^2 (1+2\nu) q_v^2 \biggr) p_v
\nonumber \\ \hphantom{h_{G_2}(p_u, p_v, q_u, q_v) =}{}
+ 3\nu(1+3\nu) \mu \bigl(2 q_u -3\mu q_v \bigr)
\label{hG2pq}\\ \hphantom{h_{G_2}(p_u, p_v, q_u, q_v) =}{}
+ \la \bigl( 6 \bigl(1+2 \tau q_u+\mu q_u^2\bigr) p_u
+4 \bigl(-q_u^2+3 \tau q_v+3 \mu q_u q_v\bigr) p_v +18 \nu \mu q_u \bigr) .
\nonumber
\end{gather}
It is easy to check that if $(p,q)$-variables are taken in the coordinate-momentum representation,
\[
 p_u = \frac{\pa}{\pa u} ,\qquad p_v = \frac{\pa}{\pa v} ,\qquad q_u = u ,
 \qquad q_v = v ,
\]
cf.~(\ref{h3-pq}), the expression~(\ref{hG2pq}) is reduced to the operator~(\ref{hG2}).
The operator $h_{G_2}(p_u, p_v, q_u, q_v)$ represents the $G_2$ elliptic model
in the Fock space.

By substituting into~(\ref{hG2pq}) the representations~(\ref{h3-D-delta})--(\ref{h3-TV}), we will arrive at the $G_2$ elliptic lattice Hamiltonians defined on uniform-uniform, uniform-exponential, exponential-uniform, exponential-exponential lattices
in $(u, v)$ space as well as the complexified $G_2$ elliptic Hamiltonian in the algebraic form.

\section{Conclusions}

In this paper, a polynomial integrable system, associated with the algebra $U_{\mathfrak{h}_5}$ and inspired
by the algebraic representation of the ${\mathcal A}_2$ elliptic model in Fock space is defined.
It has the form of a second degree polynomial in $p_i$, $i=1,2$,
\begin{equation}
\label{h}
 h_{{\mathcal A}_2} = c^{(2)}_{ij} p_{i} p_{j} + c^{(1)}_{i} p_{i} + c^{(0)} ,
\end{equation}
for the Hamiltonian and a 3rd degree polynomial in $p_i$, $i=1,2$,
\begin{equation}
\label{k}
 k_{{\mathcal A}_2} = d^{(3)}_{ijk} p_{i} p_{j} p_{k} + d^{(2)}_{ij} p_{i} p_{j} + d^{(1)}_{i} p_{i} + d^{(0)} ,
\end{equation}
for the integral, where the coefficients $\{c\}$ and $\{d\}$ are polynomials in $q$ of a finite degrees, while~$(p_i, q_i)$ form a canonical pair. Overall, the operators $h_{{\mathcal A}_2}$ and $k_{{\mathcal A}_2}$ depend on three free parameters $\mu$, $\tau$, $\nu$. Remarkably, both operators $h_{{\mathcal A}_2}$ and $k_{{\mathcal A}_2}$ can be rewritten in terms of the $\mathfrak{sl}(3)$ generators $J_{1,2,\dots,8}$ and they can be embedded into the $U_{\mathfrak{h}_5}$ algebra in the $(-3\nu,0)$ representation~(\ref{gl3pq}). Hence, $\nu$ corresponds to the mark of the representation.

It can be conjectured that

\begin{Conjecture}
Up to canonical transformation
\[
 p \rar p + f(q) ,\qquad q \rar q ,
\]
there are no other non-trivial commuting operators in the $U_{\mathfrak{h}_5}$ algebra of degree 2 and 3
in $p$ other than $h$~(\ref{h}) and $k$~(\ref{k}).
\end{Conjecture}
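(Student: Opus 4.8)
The plan is to view the statement as a classification problem in the Weyl algebra $U_{\mathfrak{h}_5}$: determine all pairs $(H,K)$ with $H$ of the form~\eqref{h} (degree two in $p$) and $K$ of the form~\eqref{k} (degree three in $p$), coefficients polynomial in $q=(q_x,q_y)$, satisfying $[H,K]=0$, and show that modulo the transformation $p\rar p+f(q)$, $q\rar q$ they reduce to $h_{{\mathcal A}_2}$~\eqref{hA2-pq} and $k_{{\mathcal A}_2}$~\eqref{kA2-pq}. Observe first that $p\rar p+f(q)$ with $f$ a gradient is exactly gauge conjugation by the exponential of a function of $q$ -- the same freedom carried by the gauge factor $D^{\nu/2}$ used throughout the paper -- so it is the natural equivalence under which to classify. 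I would write the most general admissible $H$ and $K$ with undetermined polynomial coefficients, use this gauge freedom to normalize the first-order part $c^{(1)}_i p_i$ of $H$, and expand $[H,K]$ as an element of $U_{\mathfrak{h}_5}$, collecting by total degree in $p$; vanishing of each $p$-homogeneous component yields an overdetermined system to be solved from the top degree downward.

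The leading equation comes from the top $p$-degree part of $[H,K]$. Since $H$ has $p$-degree $2$ and $K$ has $p$-degree $3$, and all $p$'s commute among themselves, the $p^4$ component of the commutator equals the Poisson bracket $\{\si_H,\si_K\}$ of the principal symbols $\si_H=c^{(2)}_{ij}p_ip_j$ and $\si_K=d^{(3)}_{ijk}p_ip_jp_k$, the quantum corrections entering only at lower $p$-degree. Its vanishing says precisely that $\si_K$ is a valence-three Killing tensor for the contravariant metric $c^{(2)}_{ij}$. The first substantial step is therefore to classify, over the ring of polynomials in $q$, all two-dimensional metrics $c^{(2)}_{ij}$ admitting a genuinely cubic (irreducible) Killing tensor; here I would invoke the established two-dimensional theory of integrable systems with higher-order integrals. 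I expect the outcome to be that, up to an affine change of $q$ and an overall scaling, the metric is forced to be the one implicit in~\eqref{hA2-pq}, whose determinant is the elliptic discriminant $D$~\eqref{D} and which carries the two continuous parameters $\mu$, $\tau$.

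With the symbols fixed, the remaining components of $[H,K]=0$ (degrees $3,2,1,0$ in $p$) become an overdetermined linear system for the subleading coefficients $c^{(1)}$, $c^{(0)}$, $d^{(2)}$, $d^{(1)}$, $d^{(0)}$. These are Bertrand--Darboux--type compatibility conditions tying the potential $c^{(0)}$ to the fixed Killing data; I would solve them order by order and show that their integrability conditions admit no solution outside the known three-parameter $(\mu,\tau,\nu)$ family, with $\nu$ entering only through the subleading coefficients (consistently with its disappearance upon passage to the $\mathfrak{gl}(3)$ generators in~\eqref{hA2-J}). The residual freedom in the subleading data is exactly what the gauge transformation $p\rar p+f(q)$ is meant to absorb, which fixes the precise meaning of ``no other'' in the statement.

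The main obstacle is the symbol classification together with the global closure of the lower-order system. Proving that no sporadic irreducible cubic Killing tensor exists for some exotic polynomial metric, and that the subsequent overdetermined potential equations have no solutions beyond the elliptic Calogero family, is genuinely hard; this is precisely what separates the cubic-integral case from the fully understood quadratic one and is why the statement is posed as a conjecture. A secondary difficulty is establishing \emph{a priori} the polynomial degree bounds that make the ansatz finite-dimensional: one must argue that the requirements $H,K\in U_{\mathfrak{h}_5}$ of $p$-degrees $2$ and $3$, together with non-reducible commutativity, actually bound the degrees of the $q$-coefficients, so that the classification reduces to a finite computation rather than an infinite hierarchy.
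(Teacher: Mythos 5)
The first thing to note is that the paper contains no proof of this statement: it is posed explicitly as a Conjecture and left open, so there is no argument of the authors to compare yours against. The only question is whether your proposal closes it, and it does not --- by your own admission. What you have written is a research program, not a proof. The two steps that carry all the weight are (i) the classification, over the ring of polynomials in $q$, of all two-dimensional contravariant metrics $c^{(2)}_{ij}$ admitting an irreducible third-order Killing tensor, and (ii) the demonstration that the resulting overdetermined system for the subleading coefficients has no solutions outside the $(\mu,\tau,\nu)$ family of~(\ref{hA2-pq}) and~(\ref{kA2-pq}). Neither exists in the literature in the generality you need: the ``established two-dimensional theory'' you would invoke is complete only for quadratic integrals (Bertrand--Darboux), while the cubic case is a well-known open classification problem with only partial results for special metric types. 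Your secondary difficulty --- the a priori degree bounds that would make the ansatz finite-dimensional --- is likewise essential and unresolved: membership in $U_{\mathfrak{h}_5}$ with fixed $p$-degree places no bound by itself on the $q$-degrees of the coefficients, so without such a bound the classification is an infinite hierarchy, not a finite computation.

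There is also a structural obstacle your plan underestimates. The paper records that $h_{{\mathcal A}_2}$ and $k_{{\mathcal A}_2}$, read as classical phase-space functions, do \emph{not} Poisson-commute, $\{h_{{\mathcal A}_2}, k_{{\mathcal A}_2}\} \neq 0$. Your leading-order step is sound: the $p^4$ component of $[H,K]$ is the Poisson bracket of the principal symbols, so $\sigma_K$ must indeed be a cubic Killing tensor of $\sigma_H$. But the nonvanishing of the full classical bracket shows that the lower-degree equations are irreducibly quantum: the commuting pair exists only because of ordering (quantum correction) terms, and the classical Bertrand--Darboux-type compatibility analysis you propose to solve ``order by order'' would, applied naively to the full symbols, conclude that no commuting pair exists at all. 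Any genuine proof must therefore carry the noncommutative corrections exactly through every $p$-degree, which is precisely the part of the analysis for which no general machinery is available. In short, your strategy is a reasonable way to attack the problem, but it has gaps exactly where the difficulty lies --- which is why the statement is, and after your proposal remains, a conjecture.
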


The operators~$h$ and~$k$ can be rewritten in terms of the abstract $\mathfrak{gl}(3)$ generators
which obey the commutation relations~(\ref{gl3}) and which give a non-vanishing commutator $[h,k]$. However, once the $\mathfrak{gl}(3)$ generators are taken in the concrete representation~(\ref{gl3pq}) the operators~$h$ and~$k$ becomes $h_{{\mathcal A}_2}$~(\ref{hA2-pq}) and $k_{{\mathcal A}_2}$~(\ref{kA2-pq}), respectively, and their commutator $[h_{{\mathcal A}_2},k_{{\mathcal A}_2}]=0$. The remarkable property of the commutator $[h,k]$ is that it can be written as a linear superposition of the artifacts $A_{1,2,\dots,9}$. We doubt there exist other elements of the universal enveloping algebra~$U_{\mathfrak{gl}(3)}$ (up to automorphisms) with such a property.

Different realizations of $(p_i, q_i)$, $i=1,2$ by differential operators, finite-difference
operators, discrete operators, or the operators in $z$, ${\bar z}$ variables lead to a variety of concrete isospectral quantum integrable polynomial systems in two continuous variables, on two-dimensional uniform, exponential lattices or mixed ones, and on the ${\C}^2$ complex space. All these integrable models depend on the continuous parameter $\nu$. If this parameter takes certain discrete values, all above-mentioned integrable systems become quasi-exactly-solvable ones admitting a finite number of polynomial eigenfunctions in the form of triangular polynomials.

\appendix

\section[gl(3) algebra]{$\boldsymbol{\mathfrak{gl}(3)}$ algebra}\label{app:a}

The algebra $\mathfrak{gl}(3)$ is defined by nine generators $J_i$, $i=0,1,2,\dots, 8 $, which obey the following commutation relations:
\begin{alignat}{5}
& [ J_0, J_1] = J_1, \qquad && [ J_0, J_2] = J_{2} ,\qquad && [ J_0, J_3] = 0,\qquad && [ J_0, J_4] = 0 , &
\nonumber\\
& [ J_0, J_5] = 0,\qquad && [ J_0, J_6] = 0 ,\qquad && [ J_0, J_7] = -J_7,\qquad && [ J_0, J_8] = -J_8 , &
\nonumber\\
& [ J_1, J_2] = 0,\qquad && [ J_1, J_3] = J_{1} ,\qquad && [ J_1, J_4] = 0,\qquad && [ J_1, J_5] = J_{2} , &
\nonumber\\
& [ J_1, J_6] = 0,\qquad && [ J_1, J_7] = J_{3} - J_{0} ,\qquad && [ J_1, J_8] = J_{4} , &
\nonumber\\
& [ J_2, J_3] = 0,\qquad && [ J_2, J_4] = J_{1},\qquad && [ J_2, J_5] = 0,\qquad && [ J_2, J_6] = J_{2} , &
\nonumber\\
& [ J_2, J_7] = J_{5},\qquad && [ J_2, J_8] = J_{6} - J_{0} , &
\nonumber\\
& [ J_3, J_4] = -J_{4},\qquad && [ J_3, J_5] = J_{5},\qquad && [ J_3, J_6] = 0,\qquad && [ J_3, J_7] = J_{7}, &
\nonumber\\
& [ J_3, J_8] = 0 , &
\nonumber\\
&[ J_4, J_5] = - J_{3}+ J_{6}, \qquad && [ J_4, J_6] = - J_{4}, \qquad && [ J_4, J_7] = J_{8}, \qquad && [ J_4, J_8] = 0 , &
\nonumber\\
& [ J_5, J_6] = J_{5}, \qquad && [ J_5, J_7] = 0, \qquad && [ J_5, J_8] = J_{7} , &
\nonumber\\
& [ J_6, J_7] = 0, \qquad && [ J_6, J_8] = J_{8} ,\qquad && [ J_7, J_8] = 0 . & \label{gl3}
\end{alignat}

\subsection{Structure constants}

The commutation relations~(\ref{gl3}) of the $\mathfrak{gl}(3)$ algebra can be represented as
\[
 [J_i,J_j] = c_{ij}^k J_k , \qquad i,j,k=0,\dots, 8,
\]
where $c_{ij}^k$ are the structure constants. The non-vanishing structure constants are:
\begin{alignat*}{6}
 & c_{01}^1 = 1,\qquad && c_{02}^2=1,\qquad && c_{07}^7=-1,\qquad && c_{08}^8=-1 , &
\\
 & c_{13}^1 = 1,\qquad && c_{15}^2=1,\qquad && c_{17}^3=1,\qquad && c_{17}^0=-1, \qquad && c_{18}^4=1 , &
\\
 & c_{24}^1 = 1,\qquad && c_{26}^2=1,\qquad && c_{27}^5=1,\qquad && c_{28}^6=1, \qquad && c_{28}^0=-1 , &
\\
 & c_{34}^4 = -1,\qquad && c_{35}^5=1,\qquad && c_{37}^7=1 , &
\\
 & c_{45}^3 = -1,\qquad && c_{45}^6=1,\qquad && c_{46}^4=-1, \qquad && c_{47}^8=1 , &
\\
 & c_{56}^5 = 1,\qquad && c_{58}^7=1 , \qquad && c_{68}^8 = 1 . &
\end{alignat*}

\subsection[Representation of gl(3) algebra in differential operators]{Representation of $\boldsymbol{\mathfrak{gl}(3)}$ algebra in differential operators}

The algebra $\mathfrak{gl}(3)$ with commutation relations~(\ref{gl3}) can be realized by the first-order
differential operators in two variables,
\begin{gather}
J_1=\frac{\pa}{\pa x}, \qquad J_2=\frac{\pa}{\pa y}, \qquad J_3=x \frac{\pa}{\pa x}, \qquad
 J_4=y\frac{\pa}{\pa x},\qquad J_5=x\frac{\pa}{\pa y}, \qquad J_6=y \frac{\pa}{\pa y},
\nonumber
\\
J_7=x \biggl(x \frac{\pa}{\pa x}+ y \frac{\pa}{\pa y}+3\nu\biggr),\qquad
J_8=y \biggl(x \frac{\pa}{\pa x}+ y \frac{\pa}{\pa y}+3\nu\biggr), \label{sl3do}
\end{gather}
and
\[
 -J_0 = x \frac{\pa}{\pa x} + y \frac{\pa}{\pa y} + 3\nu = J_3+J_6+3\nu ,
\]
where $\nu$ is parameter.
It corresponds to the irreducible representation of the spin $(-3\nu,0)$. If $-3\nu=n$ is integer,
the finite-dimensional representation space which is spanned by triangular polynomials,
\[
 {\mathcal P}_n = \langle x^p y^q \mid 0 \leq (p+q) \leq n\rangle,
\]
occurs.

\subsection[Representation of gl(3) in (p,q) space]{Representation of $\boldsymbol{\mathfrak{gl}(3)}$ in $\boldsymbol{(p,q)}$ space}\label{app:a3}

Let us take 5-dimensional Heisenberg algebra $\mathfrak{h}_5$ spanned by the generators~$p_x$,~$p_y$,~$q_x$,~$q_x$,~$I$, which satisfy the commutation relations,
\begin{alignat*}{5}
& [p_x, q_x]=1 ,\qquad && [p_y, q_y]=1 ,\qquad && [p_x, q_y]=0 ,\qquad && [p_y, q_x]=0 , &
\\
& [p_x, p_y]=0 ,\qquad && [q_x, q_y]=0 ,\qquad && [p_{x,y}, I]=0 ,\qquad && [q_{x,y}, I]=0. &
\end{alignat*}
Define its universal enveloping algebra $U_{\mathfrak{h}_5}$ as the algebra of all ordered
monomials $\big\{ q_x^{i_x} q_y^{i_y} p_x^{j_x} p_y^{j_y}\big\}$. It is evident that the algebra
$\mathfrak{gl}(3)$ realized as
\begin{gather}
J_1=p_x , \qquad J_2=p_y , \qquad J_3=q_x p_x , \qquad J_4=q_y p_x , \qquad J_5= q_x p_y , \qquad J_6=q_y p_y ,\nonumber
\\
 J_7=q_x (q_x p_x + q_y p_y + 3\nu) ,\qquad
J_8=q_y (q_x p_x + q_y p_y + 3\nu) ,\label{gl3pq}
\end{gather}
and
\[
 -J_0=q_x p_x + q_y p_y + 3\nu = J_3 + J_6 + 3\nu ,
\]
is embedded into the universal enveloping algebra $U_{\mathfrak{h}_5}$.

Let us enlist four realizations of the commutation relation $[p_x, q_x]=1$:
\begin{itemize}\itemsep=0pt
 \item Continuous
\begin{equation}
\label{h3-pq}
 p_x = \frac{\pa}{\pa x} \equiv \pa_x ,\qquad q_x=x .
\end{equation}
It is well known, the so-called coordinate-momentum representation of the $\mathfrak{h}_3$ Heisenberg
algebra.

 \item On uniform lattice
\begin{equation}
\label{h3-D-delta}
 p_x = {\mathcal D}_{\delta} ,\qquad q_x = X_{\delta} ,
\end{equation}
with gap $\de$, where ${\mathcal D}_{\delta}$ is the Norlund derivative \cite{Turbiner:1995},
it is the basis for the so-called umbral calculus.

 \item On exponential lattice
\begin{equation*}
 p_x = {\mathcal D}_{q} ,\qquad q_x = X_{q} ,
\end{equation*}
where ${\mathcal D}_{q}$ is the Jackson derivative, $q$ has the meaning of the exponential spacing. It is described in details in \cite{CT}.

 \item Complex representation on $\C$
\begin{equation}
\label{h3-TV}
 \ag = \frac{\pa}{\pa {\bar z}} , \qquad \ag^\dag = -\frac{\pa}{\pa z} +
 {\bar z},
\end{equation}
see \cite{TV} and references therein.

\end{itemize}

\section{Coefficients in the commutator~(\ref{commutator})}\label{app:b}

The commutator between $h_{A_2}$ and $k_{A_2}$ can be written as the polynomial in parameters $\tau$, $\mu$,
\begin{align*}
[ h_{A_2}(J),k_{A_2}(J) ] ={}&
 D_1 + D_2\tau + D_3\mu + D_4 \tau^2 + D_5 \tau\mu + D_6 \mu^2 + D_7 \tau^2\mu
\\&{}
+ D_8 \tau\mu^2 + D_9 \mu^3 + D_{10} \tau^3\mu
+ D_{11} \tau^2\mu^2 + D_{12} \tau\mu^3 ,
\end{align*}
where the coefficients $D_{1,\dots, 12}$ are presented by superposition of the ordered polynomials
in $\mathfrak{gl}(3)$-generators $J_{0,1,\dots, 8}$ multiplied by the artifacts $A_{1,\dots, 9}$ of the $\mathfrak{gl}(3)$ algebra,
\begin{gather*}
D_{1} =
 - \frac{2}{9} ( 8 J_4J_2 + 3 J_3J_1 ) A_9
 - \frac{2}{9} (8 J_5J_1 - 8 J_3J_2 - 11 J_2J_0 ) A_8
\\ \hphantom{D_{1} =}{}
 - \frac{4}{3}J_2J_1 A_7 - \frac{22}{9}J_2J_1 A_6 + \frac{4}{9}J_2J_1 A_5
 + \frac{22}{9}J_2^2 A_4 - \frac{4}{9}J_1^2A_3 ,
\\ 
 D_2 =
 \frac{2}{9} \bigl(- 6 J_6^2 - 6J_5J_4 + 3J_3J_0 + 4 J_0^2 - 8J_6 + 3J_3 + 10 J_0
 - 14 \bigr) A_9
\\ \hphantom{D_2 =}{}
+ \frac{8}{9} (3J_6J_5 + 9J_4J_1 + 4J_5 ) A_8
 - \frac{2}{9} (12 J_5J_1 - 13 J_2J_0 ) A_7
 - \frac{28}{9}J_6J_2 A_5 + \frac{28}{9}J_6J_1 A_3 ,
\\ 
 D_3 =
 \frac{2}{9} \bigl(2J_8J_5 - 4 J_7J_3 + 3J_7J_0 - 36 J_4^2 + 4 J_7 \bigr) A_9
\\ \hphantom{D_3 =}{}
 + \frac{1}{3} (2 J_8J_1 - 7 J_7J_5 + 24 J_4J_3 + 30 J_4 ) A_8
\\ \hphantom{D_3 =}{}
 + \frac{1}{9} ( 5 J_7J_2 + 12 J_6J_5 - 12 J_5J_3 + 36 J_5J_0 - 10 J_5 ) A_7
\\ \hphantom{D_3 =}{}
 - \frac{4}{9} (3 J_5J_0 - 4 J_5 ) A_6
 - \frac{1}{9} (36 J_6J_5 - 16 J_5J_3 + 12 J_5J_0 + 63 J_4J_1 ) A_5
\\ \hphantom{D_3 =}{}
 + \frac{1}{3} (- 8 J_6J_1 - 10 J_4J_2 + 3 J_3J_1 + 6 J_1J_0 + 17 J_1 ) A_4
\\ \hphantom{D_3 =}{}
 + \frac{1}{9} \bigl( 4 J_6^2 - J_6J_0 - 4 J_5J_4 - 19 J_6 + 8 J_0 - 12 \bigr) A_3 +
 \frac{4}{3}J_5J_2 A_2 + \frac{2}{3}J_6J_2 A_1 ,
\\ 
 D_4 =
 \frac{8}{3} (3J_4J_3-2J_4J_0 )A_8 - 4J_4J_1A_7 - 10J_4J_1A_6 + 10J_4J_2 A_4 ,
\\ 
 D_5 =
 \frac{1}{3} (9 J_8J_6 + 48 J_8J_3 + 14 J_7J_4 + 71 J_8 ) A_8
 - \frac{2}{3} (2 J_7J_5 - 3 J_4J_3 + 20 J_4J_0 ) A_7
\\ \hphantom{D_5 =}{}
- \frac{2}{3} (16 J_8J_1 - 23 J_4J_3 ) A_6
 + \frac{1}{6} ( 83 J_8J_1 - 78 J_4J_3 + 219 J_4J_0 + 242 J_4 ) A_5
\\ \hphantom{D_5 =}{}
+ \frac{1}{6} \bigl( 64 J_8J_2 - 83 J_7J_1 - 124 J_6^2 + 34 J_6J_0 - 40 J_5J_4 + 50 J_3^2
\\ \hphantom{D_5 =+ \frac{1}{6} \bigl(}{}
 - 229 J_3J_0 + 54 J_6 + 32 J_0^2 - 297 J_0 - 66 \bigr) A_4
\\ \hphantom{D_5 =}{}
- \frac{2}{3} \bigl(41 J_6J_1 - 13 J_5^2 + 7 J_4J_2 \bigr) A_2
 - \frac{2}{3} (9 J_6J_5 + 4 J_5J_0 - J_5 ) A_1 ,
\\ 
D_{6} =
\frac{26}{3}J_8^2 A_9
 + \frac{2}{3} (3 J_8J_6 + 3 J_8J_3 - 26 J_8J_0 - J_8 ) A_7
\\ \hphantom{D_{6} =}{}
 - 6 (J_8 J_6 - J_8J_3 - J_8 J_0 ) A_6 -
 \frac{1}{3} (7 J_8J_3 + 10 J_8J_0 + 20 J_8 - 19 J_7 J_4 ) A_5
\\ \hphantom{D_{6} =}{}
 + \frac{1}{3} \bigl(36 J_7J_6 - 19 J_7J_0 - 90 J_4^2 + 21 J_7 \bigr) A_4
\\ \hphantom{D_{6} =}{}
 + \frac{1}{3} \bigl(19 J_7J_1 - 8 J_6^2 - 4 J_6J_3 + 50 J_6J_0 - 6 J_5J_4 +
 J_3^2 + 54 J_6 + 20 J_3 + 50 \bigr) A_2
\\ \hphantom{D_{6} =}{}
 - (3 J_8J_1 - J_7J_5 ) A_1 ,
\\ 
D_{7} =
2 (- 9 J_8J_6 + 4 J_8J_3 - 3 J_8 ) A_7 - 8 J_7J_4 A_6 +
4 (7J_8J_6 - 2 J_8J_3 + 4 J_8J_0 + 2 J_7J_4 ) A_5
\\ \hphantom{D_{7} =}{}
+ 4 \bigl( 2 J_8J_5 - 9 J_7J_6 - 4 J_7J_0 + 6 J_4^2 + 5 J_7 \bigr) A_4
 + 8 J_8J_4 A_3
\\ \hphantom{D_{7} =}{}
+ 2 (4 J_7J_1 - 2 J_6J_3 - 23 J_6 + 4J_0 + 6 J_3 + 23 ) A_2 -
 2 (4 J_8J_1 - 9 J_6J_4 ) A_1 ,
\\ 
D_8 =
-6J_8J_4 A_4 + \bigl(75J_4^2-27J_7J_6-2J_7J_3+4J_7J_0 \bigr) A_2
\\ \hphantom{D_8 =}{}
+ (15J_8J_6-16J_8J_3+20J_8J_0+25J_8 ) A_1 ,
\\ 
D_9 = - 18J_8^2 A_4 + 18J_8J_4 A_2 - 12J_8J_7 A_1 ,
\\ 
D_{10} = -66J_4^2 A_2 ,\qquad
D_{11} = -48J_8J_4 A_2 ,\qquad
D_{12} = -30J_8^2 A_2 .
\end{gather*}

\section[Maple code examples]{Maple code examples}\label{app:c}

In this appendix, we will provide an example of \textsc{Maple} code which carry out two test calculations:
the commutator $[J_4 J_1, J_5 J_1]$ and the commutator of two artifacts $[A_1, A_2]$.
Ultimate goal is to get a combination of lexicographically ordered monomials.

First upload \textsc{Maple} packages: \textsc{Physics}, \textsc{Library} and \textsc{StringTools}:
\begin{verbatim}
> with(Physics);
 with(Library);
 with(StringTools);
\end{verbatim}
Here we define the $\mathfrak{gl}(3)$ commutation relations among the generators $j0, j1, \dots, j8$
\begin{verbatim}
> Setup(noncommutativeprefix = j,
%Commutator(j1, j0) = -j1,
%Commutator(j2, j0) = -j2,
%Commutator(j2, j1) = 0,
%Commutator(j3, j0) = 0,
%Commutator(j3, j1) = -j1,
%Commutator(j3, j2) = 0,
%Commutator(j2, j3) = 0,
%Commutator(j4, j0) = 0,
%Commutator(j4, j1) = 0,
%Commutator(j4, j2) = -j1,
%Commutator(j4, j3) = j4,
%Commutator(j5, j0) = 0,
%Commutator(j5, j1) = -j2,
%Commutator(j5, j2) = 0,
%Commutator(j5, j3) = -j5,
%Commutator(j5, j4) = j3-j6,
%Commutator(j5, j6) = j5,
%Commutator(j5, j8) = j7,
%Commutator(j6, j0) = 0,
%Commutator(j6, j1) = 0,
%Commutator(j6, j2) = -j2,
%Commutator(j6, j3) = 0,
%Commutator(j6, j4) = j4,
%Commutator(j6, j5) = -j5,
%Commutator(j6, j7) = 0,
%Commutator(j7, j0) = j7,
%Commutator(j7, j1) = j0-j3,
%Commutator(j7, j2) = -j5,
%Commutator(j7, j3) = -j7,
%Commutator(j7, j4) = -j8,
%Commutator(j7, j5) = 0,
%Commutator(j8, j0) = j8,
%Commutator(j8, j1) = -j4,
%Commutator(j8, j2) = j0-j6,
%Commutator(j8, j3) = 0,
%Commutator(j8, j4) = 0,
%Commutator(j8, j5) = -j7,
%Commutator(j8, j6) = -j8,
%Commutator(j8, j7) = 0);
\end{verbatim}\vspace{-5mm}
\begin{gather*}
{\color{blue}[{\it algebrarules} = \{[j1, j0]_{-} = -j1, [j2, j0]_{-} = -j2, [j2, j1]_{-} = 0, [j3, j0]_{-} = 0,} \\
{\color{blue}[j3, j1]_{-} = -j1,[j3, j2]_{-} = 0, [j4, j0]_{-} = 0, [j4, j1]_{-} = 0, [j4, j2]_{-} = -j1, [j4, j3]_{-} = j4,}\\
{\color{blue}[j5, j0]_{-} = 0,
[j5, j1]_{-} = -j2, [j5, j2]_{-} = 0, [j5, j3]_{-} = -j5, [j5, j4]_{-} = j3-j6,}\\
{\color{blue}[j6, j0]_{-} = 0,
 [j6, j1]_{-} = 0, [j6, j2]_{-} = -j2, [j6, j3]_{-} = 0, [j6, j4]_{-} = j4, [j6, j5]_{-} = -j5,}
\\
{\color{blue}[j6, j7]_{-} = 0, [j7, j0]_{-} = j7, [j7, j1]_{-} = j0-j3, [j7, j2]_{-} = -j5, [j7, j3]_{-} = -j7,}
\\
{\color{blue}[j7, j4]_{-} = -j8, [j7, j5]_{-} = 0, [j8, j0]_{-} = j8, [j8, j1]_{-} = -j4, [j8, j2]_{-} = j0-j6,}
\\
{\color{blue}[j8, j3]_{-} = 0,
 [j8, j4]_{-} = 0, [j8, j5]_{-} = -j7, [j8, j6]_{-} = -j8, [j8, j7]_{-} = 0\},}
 \\
{\color{blue}noncommutativeprefix = \{j\}]}
 \end{gather*}
Here we define the lexicographic ordering of the $j$-generators via the list ``jorder''
\begin{verbatim}
> jorder := [j8, j7, j6, j5, j4, j3, j2, j1, j0]:
\end{verbatim}

Here we define the function OR which orders the $J$-generators in every term of an expression
in the lexicographic order defined above.

The function OR uses the command \textsc{SortProducts} which is a command of the \textsc{Physics Library}
which acts on a \textsc{Maple} expression containing noncommutative products. It uses the list ``jorder''
which defines the ordering of the $J$-generators. The option ``usecommutator'' indicates that
commutators defined previously.
\begin{verbatim}
> OR := proc (x)
Simplify(SortProducts(Expand(x), jorder, usecommutator))
end proc;
\end{verbatim}
Examples: product and commutator of two monomials
\begin{verbatim}
> monomial1 := j4 * j1:
> monomial2 := j5 * j1:
\end{verbatim}
Products of monomials
\begin{verbatim}
> productmonomials12 := monomial1 * monomial2;
\end{verbatim}\vspace{-5mm}
\[
{\color{blue}{\it productmonomials12:=} j4 j1 j5 j1}
\]
\begin{verbatim}
> productmonomials21 := monomial2 * monomial1;
\end{verbatim}\vspace{-5mm}
\[
{\color{blue}{\it productmonomials21:=} j5 j1 j4 j1}
\]
We apply the function OR to reorder the product of monomials in lexicographic order
\begin{verbatim}
> OR(productmonomials12);
\end{verbatim}\vspace*{-5mm}
\[
{\color{blue} j5 j4 j1^2 + j4 j1 j2 - j3 j1^2 + j1^2 j6}
\]
When two generators commute, like $j1$ and $j2$, we need to use \textsc{SortProducts} with the desired ordering list: $[j2,j1]$
\begin{verbatim}
> orderedproductmonomials12:=SortProducts(SortProducts(OR(productmonomials12),
[j2, j1], usecommutator), [j6, j1], usecommutator);
\end{verbatim}\vspace{-5mm}
\[
{\color{blue} orderedproductmonomials12 := j5 j4 j1^2 + j4 j2 j1 - j3 j1^2 + j6 j1^2}
\]

\newpage

\begin{verbatim}
> orderedproductmonomials21:= OR(productmonomials21);
\end{verbatim}\vspace{-5mm}
\[
{\color{blue} {\it orderedproductmonomials21 :=} j5 j4 j1^2}
\]
Commutator of monomials using the \textsc{Physics Library} command ``Commutator'':
\begin{verbatim}
> commutatormonomials12 := Commutator(monomial1, monomial2);
\end{verbatim}\vspace{-5mm}
\[
{\color{blue}{\it commutatormonomials12:=} j4 j2 j1 + (-j3+j6) j1^2}
\]
Reordering of terms in commutator using function OR and \textsc{SortProducts}
\begin{verbatim}
> commutatormonomials12 := SortProducts(SortProducts(OR(commutatormonomials12),
[j2, j1], usecommutator), [j6, j1], usecommutator);
\end{verbatim}\vspace{-5mm}
\[
{\color{blue}{\it commutatormonomials12:=} j4 j2 j1 - j3 j1^2 + j6 j1^2}
\]
Defining artifacts
\begin{verbatim}
> A[1] := j8*j5-j7*j6:
 A[2] := j8*j3-j7*j4:
 A[3] := j7*j2+j5*j0+j5:
 A[4] := j8*j1+j4*j0+j4:
 A[5] := j7*j1+j3*j0+j3:
 A[6] := j8*j2+j6*j0+j6:
 A[7] := j6*j3-j5*j4+j3:
 A[8] := j6*j1-j4*j2:
 A[9] := j5*j1-j3*j2:
\end{verbatim}
Commutator of artifacts $A[1]$, $A[2]$:
\begin{verbatim}
> CommutatorA1A2 := SortProducts(SortProducts(OR(Commutator(A[1], A[2])),
 [j8, j7], usecommutator), [j7, j4], usecommutator)
\end{verbatim}\vspace{-5mm}
\[
{\color{blue}{\it CommutatorA1A2 :=} -j8^2 j5 + j8 j7 j6 - j8 j7 j3 + j7 j4 j7 - j8 j7}
\]
\begin{verbatim}
 > AnsatzcommutatorA1A2:= '-j8*A[1]-j7*A[2]';
\end{verbatim}\vspace{-5mm}
\[
{\color{blue}{\it AnsatzcommutatorA1A2:=} -j8 A_1 - j7 A_2}
\]
Testing commutator of artifacts $A[1]$, $A[2]$:
\begin{verbatim}
 > OR(Simplify(CommutatorA1A2 - AnsatzcommutatorA1A2));
\end{verbatim}\vspace{-5mm}
\[
{\color{blue}0}
\]

{\bf Note added in proof.} We know that $A_2$ rational system, see \eqref{hA2diff} for the
Hamiltonian $h_3 \equiv h_{{\cal A}_2}(x,y)$ and \eqref{kA2diff} for the cubic
integral $k_3 \equiv k_{{\cal A}_2}(x,y)$ in the algebraic form at
$\mu=\tau=0$, is integrable, $[h_3, k_3]=0$; however, this rational system
admits separation of variables in polar coordinates in the space of
relative motion and, hence, the existence of the additional quadratic
integral of motion $x_3$, see~\cite{TTW:2009}: $[h_3, x_3]=0$ with property $I \equiv
[x_3, k_3] \neq 0$; hence, $[h_3, I]=0$. Thus, the $A_2$ rational
integrable system is superintegrable. Recently, it was shown in \cite{LVT2023} that double commutators
$[x_3, I]=P_3(h_3, k_3, x_3, I)$ and $[k_3, I]=Q_3(h_3, k_3, x_3, I)$ are
cubic polynomials in
$(h_3, k_3, x_3, I)$. Hence, we arrive at 4-generated,
infinite-dimensional cubic polynomial algebra of integrals.

\subsection*{Acknowledgments}

A.V.T.\ is thankful to Willard~Miller Jr.(1937--2023) and Peter~Olver (University of Minnesota, USA) for helpful discussions in different stages of the project and the general encouragement to proceed and to complete this work. Due to enormous computational complexity, this research was running for many years, it was supported in part by the PAPIIT grants IN109512 and IN108815 (Mexico) at the initial stage of the study and by the PAPIIT grant IN113022 (Mexico) at its final stage.
M.A.G.A.\ thanks the CONACyT grant for master degree studies (Mexico) in 2016--2018, when the key calculations of the commutator~(\ref{commutator}) were partially carried out.

All symbolic calculations were carried out by using the MAPLE-18 on a regular DELL desktop computer with CPU processor Intel(R) Core (TM) i7-3770 $@$ 3.40GHz with 6Gb RAM, although some pieces of calculations were done on a regular PC laptop.

A.V.T.\ thanks PASPA-UNAM grant (Mexico) for its support during his sabbatical stay in 2021--2022 at the University of Miami, where this work was mostly completed. We thank all anonymous referees for careful reading of the text, many inspiring comments, remarks and proposals, which improved significantly the presentation.

This work is dedicated to the 70th birthday of Peter Olver to whom we always had admiration as an exemplary mathematician and scientist.

\pdfbookmark[1]{References}{ref}
\LastPageEnding


\begin{thebibliography}{99}
\footnotesize\itemsep=0pt

\bibitem{CT}
Chryssomalakos C., Turbiner A.V., Canonical commutation relation preserving
 maps, \href{https://doi.org/10.1088/0305-4470/34/48/312}{\textit{J.~Phys.~A}} \textbf{34} (2001), 10475--10485,
 \href{https://arxiv.org/abs/math-ph/0104004}{arXiv:math-ph/0104004}.

\bibitem{LVT2023}
Lopez Vieyra J.C., Turbiner A.V.,
Wolfes model aka $G_2/I_6$-rational integrable model: $\mathfrak{g}^{(2)}$, $\mathfrak{g}^{(3)}$ hidden algebras and quartic polynomial algebra of integrals,
 \href{https://arxiv.org/abs/math-ph/0104004}{arXiv:2310.20481}.

\bibitem{Olshanetsky:1983}
Olshanetsky M.A., Perelomov A.M., Quantum integrable systems related to {L}ie
 algebras, \href{https://doi.org/10.1016/0370-1573(83)90018-2}{\textit{Phys. Rep.}} \textbf{94} (1983), 313--404.

\bibitem{Oshima:2007}
Oshima T., Completely integrable systems associated with classical root
 systems, \href{https://doi.org/10.3842/SIGMA.2007.061}{\textit{SIGMA}} \textbf{3} (2007), 061, 50~pages,
 \href{https://arxiv.org/abs/math-ph/0502028}{arXiv:math-ph/0502028}.

\bibitem{Turbiner:1995}
Smirnov Yu.F., Turbiner A.V., Lie algebraic discretization of differential
 equations, \href{https://doi.org/10.1142/S0217732395001927}{\textit{Modern Phys. Lett.~A}} \textbf{10} (1995), 1795--1802,
 \href{https://arxiv.org/abs/funct-an/9501001}{arXiv:funct-an/9501001}.

\bibitem{ST:2015}
Sokolov V.V., Turbiner A.V., {Q}uasi-exact-solvability of the {$A_2/G_2$}
 elliptic model: algebraic forms, {$\mathfrak{sl}(3)/\mathfrak{g}^{(2)}$}
 hidden algebra, polynomial eigenfunctions, \href{https://doi.org/10.1088/1751-8113/48/15/155201}{\textit{J.~Phys.~A}} \textbf{48}
 (2015), 155201, 15~pages, {C}orrigendum,
 \href{https://doi.org/10.1088/1751-8113/48/35/359501}{\textit{J.~Phys.~A}}
 \textbf{48} (2015), 359501, 2~pages, \href{https://arxiv.org/abs/1409.7439}{arXiv:1409.7439}.

\bibitem{TTW:2009}
Tremblay F., Turbiner A.V., Winternitz P., An infinite family of solvable and
 integrable quantum systems on a plane, \href{https://doi.org/10.1088/1751-8113/42/24/242001}{\textit{J.~Phys.~A}} \textbf{42}
 (2009), 242001, 10~pages, \href{https://arxiv.org/abs/0904.0738}{arXiv:0904.0738}.

\bibitem{Lame:1989}
Turbiner A.V., Lam\'e equation, {$\mathfrak{sl}(2)$} algebra and isospectral
 deformations, \href{https://doi.org/10.1088/0305-4470/22/1/001}{\textit{J.~Phys.~A}} \textbf{22} (1989), L1--L3.

\bibitem{at:1994}
Turbiner A.V., Lie-algebras and linear operators with invariant subspaces, in
 Lie Algebras, Cohomology, and New Applications to Quantum Mechanics
 ({S}pringfield, {MO}, 1992),\textit{Contemp. Math.}, Vol.~160, \href{https://doi.org/10.1090/conm/160/01576}{American
 Mathematical Society}, Providence, RI, 1994, 263--310,
 \href{https://arxiv.org/abs/funct-an/9301001}{arXiv:funct-an/9301001}.

\bibitem{Turbiner:2000}
Turbiner A.V., Different faces of harmonic oscillator, in
 S{IDE}~{III}---Symmetries and Integrability of Difference Equations
 ({S}abaudia, 1998), \textit{CRM Proc. Lecture Notes}, Vol.~25, \href{https://doi.org/10.1090/crmp/025/39}{American
 Mathematical Society}, Providence, RI, 2000, 407--414,
 \href{https://arxiv.org/abs/math-ph/9905006}{arXiv:math-ph/9905006}.

\bibitem{Heun}
Turbiner A.V., The {H}eun operator as a {H}amiltonian, \href{https://doi.org/10.1088/1751-8113/49/26/26LT01}{\textit{J.~Phys.~A}}
 \textbf{49} (2016), 26LT01, 8~pages, \href{https://arxiv.org/abs/1603.02053}{arXiv:1603.02053}.

\bibitem{TME:2016}
Turbiner A.V., Miller Jr. W., Escobar-Ruiz M.A., From two-dimensional
 (super-integrable) quantum dynamics to (super-integrable) three-body
 dynamics, \href{https://doi.org/10.1088/1751-8121/abcb43}{\textit{J.~Phys.~A}} \textbf{54} (2021), 015204, 10~pages,
 \href{https://arxiv.org/abs/1912.05726}{arXiv:1912.05726}.

\bibitem{TV}
Turbiner A.V., Vasilevski N., Poly-analytic functions and representation
 theory, \href{https://doi.org/10.1007/s11785-021-01154-y}{\textit{Complex Anal. Oper. Theory}} \textbf{15} (2021), 110,
 24~pages, \href{https://arxiv.org/abs/2103.12771}{arXiv:2103.12771}.

\bibitem{WW:1927}
Whittaker E.T., Watson G.N., A~course of modern analysis, \textit{Cambridge Math. Lib.},
 \href{https://doi.org/10.1017/CBO9780511608759}{Cambridge University Press}, Cambridge, 1996.

\end{thebibliography}
\end{document}